\def\identity{\leavevmode\hbox{\small1\kern-3.8pt\normalsize1}}
\newtheorem*{mydef}{Definition}
\renewcommand{\epsilon}{\varepsilon}
\newtheorem{definition}{Definition}
\newtheorem{lemma}[definition]{Lemma}
\newtheorem{thm}[definition]{Theorem}
\newtheorem*{rep@theorem}{\rep@title}
\newcommand{\newreptheorem}[2]{%
\newenvironment{rep#1}[1]{%
 \def\rep@title{#2 \ref{##1} (restatement)}%
 \begin{rep@theorem}}%
 {\end{rep@theorem}}}
\newtheorem{example}{Example}
\newtheorem{fact}{Fact}
\def\ba#1\ea{\begin{align}#1\end{align}}
\def\ban#1\ean{\begin{align*}#1\end{align*}}
\newcommand{\be}{\begin{equation}}
\newcommand{\ee}{\end{equation}}
\def\benum{\begin{enumerate}}
\def\eenum{\end{enumerate}}
\def\squareforqed{\hbox{\rlap{$\sqcap$}$\sqcup$}}
\def\qed{\ifmmode\squareforqed\else{\unskip\nobreak\hfil
\penalty50\hskip1em\null\nobreak\hfil\squareforqed
\parfillskip=0pt\finalhyphendemerits=0\endgraf}\fi}
\def\endenv{\ifmmode\;\else{\unskip\nobreak\hfil
\penalty50\hskip1em\null\nobreak\hfil\;
\parfillskip=0pt\finalhyphendemerits=0\endgraf}\fi}
\newcommand{\<}{\langle}
\renewcommand{\>}{\rangle}
\def\be{\begin{equation}}
\def\ee{\end{equation}}
\def\ben{\begin{eqnarray}}
\def\een{\end{eqnarray}}
\def\bei{\begin{itemize}}
\def\eei{\end{itemize}}
\mathchardef\ordinarycolon\mathcode`\:
\def\vcentcolon{\mathrel{\mathop\ordinarycolon}}
\newcommand{\nc}{\newcommand}
 \nc{\proj}[1]{|#1\rangle\!\langle #1 |} 
\nc{\avg}[1]{\langle#1\rangle}
\nc{\conv}{\operatorname{conv}}
\nc{\smfrac}[2]{\mbox{$\frac{#1}{#2}$}} \nc{\Tr}{\operatorname{Tr}}
\nc{\ox}{\otimes} \nc{\dg}{\dagger} \nc{\dn}{\downarrow}
\nc{\lmax}{\lambda_{\text{max}}}
\nc{\lmin}{\lambda_{\text{min}}}
\nc{\csupp}{{\operatorname{csupp}}}
\nc{\qsupp}{{\operatorname{qsupp}}} \nc{\var}{\operatorname{var}}
\nc{\rar}{\rightarrow} \nc{\lrar}{\longrightarrow}
\nc{\poly}{\operatorname{poly}}
\nc{\polylog}{\operatorname{polylog}} \nc{\Lip}{\operatorname{Lip}}
\nc{\Om}{\Omega}
\nc{\wt}[1]{\widetilde{#1}}
\def\>{\rangle}
\def\<{\langle}
\nc{\glneq}{{\raisebox{0.6ex}{$>$}  \hspace*{-1.8ex} \raisebox{-0.6ex}{$<$}}}
\nc{\gleq}{{\raisebox{0.6ex}{$\geq$}\hspace*{-1.8ex} \raisebox{-0.6ex}{$\leq$}}}
\nc{\vholder}[1]{\rule{0pt}{#1}}
\nc{\wh}[1]{\widehat{#1}}
\nc{\h}[1]{\widehat{#1}}
\nc{\ob}[1]{#1}
\def\beq{\begin {equation}}
\def\eeq{\end {equation}}
\def\be{\begin{equation}}
\def\ee{\end{equation}}
\nc{\eq}[1]{(\ref{eq:#1})} 
\nc{\eqs}[2]{\eq{#1} and \eq{#2}}
\nc{\eqn}[1]{Eq.~(\ref{eqn:#1})}
\nc{\eqns}[2]{Eqs.~(\ref{eqn:#1}) and (\ref{eqn:#2})}
\nc{\region}{\cS\cW}
\newenvironment{protocol*}[1]
  {
    \begin{center}
      \hrulefill\\
      \textbf{#1}
  }
  {
    \vspace{-1\baselineskip}
    \hrulefill
    \end{center}
  }
\newcommand{\defeq}{\vcentcolon=}
\begin{document}

\title{No quantum realization of extremal no-signaling boxes}

\author{Ravishankar \surname{Ramanathan}}
\email{ravishankar.r.10@gmail.com}
\affiliation{National Quantum Information Center of Gda\'{n}sk,  81-824 Sopot, Poland}
\affiliation{Institute of Theoretical Physics and Astrophysics, University of Gda\'{n}sk, 80-952 Gda\'{n}sk, Poland}
\author{Jan Tuziemski}
\affiliation{National Quantum Information Center of Gda\'{n}sk, 81-824 Sopot, Poland}
\affiliation{Faculty of Applied Physics and Mathematics, Technical University of Gda\'{n}sk, 80-233 Gdansk, Poland}
\author{Micha{\l} Horodecki}
\affiliation{National Quantum Information Center of Gda\'{n}sk, 81-824 Sopot, Poland}
\affiliation{Institute of Theoretical Physics and Astrophysics, University of Gda\'{n}sk, 80-952 Gda\'{n}sk, Poland}
\author{Pawe{\l} \surname{Horodecki}}
\affiliation{National Quantum Information Center of Gda\'{n}sk, 81-824 Sopot, Poland}
\affiliation{Faculty of Applied Physics and Mathematics, Technical University of Gda\'{n}sk, 80-233 Gdansk, Poland}


\begin{abstract}
The study of quantum correlations is important for fundamental reasons as well as for quantum communication and information processing tasks. On the one hand, it is of tremendous interest to \textit{derive} the correlations produced by measurements on separated composite quantum systems from within the set of all correlations obeying the no-signaling principle of relativity, by means of information-theoretic principles \cite{Principles1, Principles2, Principles3, Principles4, Principles5}. On the other hand, cryptographic protocols based on quantum non-local correlations have been proposed for the generation of secure keys \cite{BHK} and the amplification and expansion of randomness \cite{CR, Pir10} against general no-signaling adversaries. In both these research programs, a fundamental question arises : can any measurements on quantum states realize the correlations present in pure extremal no-signaling boxes?   
Here, we answer this question in full generality showing that \textit{no non-trivial} (not local realistic) extremal boxes of general no-signaling theories can be realized in quantum theory. We then explore some important consequences of this fact.
\end{abstract}

\maketitle

It is well-known that quantum mechanics allows for non-local correlations between spatially separated systems, i.e., correlations that cannot be explained by any local hidden variable theory as shown by the violation of Bell inequalities. The correlations described by quantum theory form a convex set which is sandwiched between the sets of classical correlations and general no-signaling correlations. By how much and why the quantum set is inside the no-signaling set has been the subject of intense research \cite{Tsirelson, Principles1, Principles2, Principles3, Principles4, Principles5}.

The classical and no-signaling sets are known to form convex polytopes, the quantum set while being convex is not a polytope in general. Each point in the set describes a \textit{box} of correlations, that is a set of conditional probability distributions for outputs given inputs. The extremal boxes (vertices) of the no-signaling polytope are the \textit{pure} states of the no-signaling theory and their purity implies that they are completely uncorrelated with the environment. Consequently, access to an extremal non-local box would be of great advantage in cryptographic tasks, providing intrinsic certified randomness and key secure against any eavesdropper even when the latter is limited only by the no-signaling principle. Extremal boxes are also of crucial importance in the program of \textit{deriving} the set of quantum correlations from purely information-theoretic principles, being essential test-beds for checking the validity of such principles \cite{Principles1, Principles2, Principles3, Principles4, Principles5}. Furthermore, the extremal boxes are the ones that maximally violate Bell inequalities in no-signaling theories; for some Bell inequalities there exists a single \textit{unique} box that gives maximal violation. For instance, for the celebrated CHSH inequality \cite{CHSH}, there exists a unique extremal box known as the Popescu-Rohrlich (PR) box \cite{PR} which maximally violates the inequality. From these considerations, one arrives at the fundamental question: does quantum mechanics allow for the realization of extremal non-local no-signaling boxes? In other words, is there a Bell scenario (for some number of parties, inputs and outputs) for which the quantum set reaches a non-local vertex of the corresponding no-signaling polytope? As far as we are aware, this question was first posed formally in \cite{Fritz2} following the results of \cite{YCATS} where it was shown that for three parties with two inputs and two outputs, no quantum realization of extremal no-signaling boxes is possible. 

Here, we show that the answer to the question is negative in full generality: \textit{no quantum realization of an extremal non-local no-signaling box exists in any Bell scenario}. To show this, we make use of recently discovered connections between contextuality, Bell scenarios and graph theory \cite{Winter, CSW, Principles4, Principles5, Fritz}. We then consider a few interesting consequences of this fact. Firstly, as a simple application, we consider the important class of Bell inequalities known as multi-player \textit{unique games} and give a simple condition for identifying when the inequality is maximally violated by a single vertex. Our main result coupled with the fact that all unique games can be won with certainty in general no-signaling theories, then shows simply that quantum theory does not allow for a winning strategy for these unique games. 

Secondly, the non-local vertices being excluded from quantum theory implies that any box that can be distilled to such a vertex by operations (wirings) performed locally by the parties is also excluded. We present a generalization of previously discovered protocols for non-locality distillation \cite{BS2009} to the case of generalized PR boxes with more outcomes and parties, thus excluding new classes of boxes from the quantum set. Finally, the fact that vertices are decoupled from the environment (in particular, eavesdroppers in cryptographic tasks) leads us to consider if any box in the interior of the no-signaling polytope is always \textit{non-locally} correlated with the environment. We present an extension (into an $n+1$ party box) for the class of $n$ party boxes realized as admixtures of the generalized PR box with the isotropic completely noisy box, showing that indeed for any non-zero fraction of noise, such a box exhibits non-local correlations with the eavesdropper's system. We conclude with some open questions. For convenience, the proofs are deferred to the Supplemental Material \cite{suppmat}, with sketches of the proofs given in the main text.

{\it Extremal no-signaling correlations.}
Consider an $n$-party Bell experiment. We use the labels $(\mathcal{A}_i, \mathcal{X}_i)$ with $i \in [n]$ (with $[n] \defeq \{1,\dots,n\}$), where the sets $\mathcal{X}_i$ of size $m_i$ denote the respective inputs $x_i$ of the $n$ parties, while the sets $\mathcal{A}_i$ of size $k_i$ denote their respective outputs $a_i$. The number of inputs $m_i$ and outputs $k_i$ for each party is arbitrary but for ease of notation we will consider $m_i = m$ and $k_i = k$, $\forall i \in [n]$, whenever such a simplification does not affect the generality of the argument. A box $\mathcal{P}$ describes a set of conditional probability distributions $P(\textbf{a}| \textbf{x})$ with $\textbf{a} = \{a_1, \dots, a_n \} \in \mathcal{A}, \textbf{x} = \{x_1, \dots, x_n \} \in \mathcal{X}$ where $\mathcal{A} = \mathcal{A}_1 \times \dots \times \mathcal{A}_n$ and similarly $\mathcal{X} = \mathcal{X}_1 \times \dots \times \mathcal{X}_n$. The corresponding Bell scenario is denoted $\mathcal{B}(n, m, k)$. 

The box $\mathcal{P}$ is a valid no-signaling box (satisfying the no-signaling principle of relativity) for the Bell scenario if it satisfies:
\begin{itemize}
\item Non-negativity: $P(\textbf{a} | \textbf{x}) \geq 0 \; \; \forall \textbf{a}, \textbf{x}$; 
\item Normalization: $\sum_{\textbf{a}} P(\textbf{a} | \textbf{x}) = 1 \; \; \forall \textbf{x}$; and 
\item No-signaling:
\begin{eqnarray}
\label{no-signal}
\sum_{a_i} P(\textbf{a}|\textbf{x}^{(i)}) = \sum_{a_i} P(\textbf{a}|\textbf{x'}^{(i)}) \quad \forall \textbf{a}, \textbf{x}^{(i)}, \textbf{x'}^{(i)}, i, 
\end{eqnarray}
\end{itemize}
where $\textbf{x}^{(i)} = \{x_1, \dots, x_{i-1}, x_i, x_{i+1}, \dots, x_n\}$ and $\textbf{x'}^{(i)} = \{x_1, \dots, x_{i-1}, x'_i, x_{i+1}, \dots, x_n\}$.

The set of all boxes satisfying the above conditions forms the no-signaling convex polytope $\mathbf{NS}(n,m,k)$ which is written succinctly as $$\mathbf{NS}(n,m,k) = \{ \mathcal{P} \in \mathbb{R}^{(mk)^n} : \mathit{A} \cdot | \mathcal{P} \rangle \leq | \mathit{b} \rangle \}$$ and is of dimension $D = \prod_{i=1}^{n} \left[ m_i (k_i - 1) +1 \right] - 1$ (see for example \cite{Pironio}). Here the constraints of non-negativity, normalization and no-signaling are mathematically written in terms of the matrix $\mathit{A}$ and the vector $| \mathit{b} \rangle$, and the box $\mathcal{P}$ is written as a vector of length $(mk)^n$. 
Boxes that in addition satisfy the integrality constraint: 
\begin{itemize}
\item Integrality: $P(\textbf{a} | \textbf{x}) \in \{ 0, 1 \}$
\end{itemize}
are said to be \textit{classical} (deterministic) boxes $\mathcal{P}_d$. The convex hull of these deterministic boxes gives rise to the classical polytope $\mathbf{C}(n,m,k)$, i.e., the set of correlations obtainable from local hidden variable theories.

The set of quantum correlations $\mathbf{Q}(n,m,k)$ is defined as follows: $\mathcal{P} \in \mathbf{Q}(n,m,k)$ if there exist a state $\rho \in \mathcal{H}_d$ for some arbitrary dimension $d$, sets of measurement operators $\{E^{{x_i},{a_i}}_{i}\}$ for each party such that for all inputs and outputs 
\begin{equation}
P(\textbf{a} | \textbf{x}) = \Tr[\rho \otimes_{i=1}^{n} E^{{x_i},{a_i}}_{i}]
\end{equation}
with the measurement operators satisfying the requirements of hermiticity (${E^{{x_i},{a_i}}_{i}}^{\dagger} = E^{{x_i},{a_i}}_{i}, \; \; \forall x_i, a_i$), orthogonality ($E^{{x_i},{a_i}}_{i} E^{{x_i},{a'_i}}_{i} = \delta_{a_i, a'_i} E^{{x_i},{a_i}}_{i}, \; \; \forall x_i$) and completeness ($\sum_{a_i} E^{{x_i},{a_i}}_{i} = \identity, \;\; \forall x_i$).
This set is convex but in general is not a polytope, and we have the inclusions $\mathbf{C}(n,m,k) \subseteq \mathbf{Q}(n,m,k) \subseteq \mathbf{NS}(n,m,k)$. 

We are interested in the vertices of the no-signaling polytope, a box $\mathcal{P}$ being a vertex if it cannot be expressed as a non-trivial convex combination of the boxes in $\mathbf{NS}(n,m,k)$. Here, the following mathematical characterization of vertices will prove useful: \textit{every vertex satisfies in a unique way a certain number of the inequality constraints in $\mathit{A} \cdot | \mathcal{P} \rangle \leq |\mathit{b} \rangle$ with equality}, the corresponding submatrix $\mathit{\tilde{A}}(\mathcal{P})$ can therefore be used to uniquely identify the vertex (see Fact 1 in the Supplemental Material). 
A \textit{non-local vertex} is one that does not belong to $\mathbf{C}(n,m,k)$, in particular it does not contain only $0, 1$ entries. 

We use recently discovered connections between contextuality and non-local game scenarios to graph theory \cite{Winter, CSW,  Principles4, Principles5, Fritz} to show that such a non-local vertex does not belong to the quantum set $\mathbf{Q}(n,m,k)$ nor to its closure $\text{cl}(\mathbf{Q}(n,m,k))$. Our main result is then the following Theorem \ref{non-local-vertex}. For convenience, we sketch a proof here with the detailed proof deferred to the Supplemental Material.  
\begin{thm}
\label{non-local-vertex}
For some (arbitrary) $(n, m, k)$, let $\mathcal{P}^{nl}$ be a vertex of the no-signaling polytope $\mathbf{NS}(n,m,k)$ such that $\mathcal{P}^{nl} \notin \mathbf{C}(n,m,k)$. Then $\mathcal{P}^{nl} \notin \text{cl}(\mathbf{Q}(n,m,k))$. 
\end{thm}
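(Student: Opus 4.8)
The strategy is to exploit the graph-theoretic framework for Bell/contextuality scenarios (Cabello--Severini--Winter, Fritz et al.). I would associate to the Bell scenario $\mathcal{B}(n,m,k)$ an orthogonality (exclusivity) graph $G$ whose vertices are the events $(\textbf{a}|\textbf{x})$ and whose edges connect events that are jointly exclusive (e.g.\ same inputs, different outputs on some party, or more generally events that can be excluded within a single context). A no-signaling box $\mathcal{P}$ assigns a weight $P(\textbf{a}|\textbf{x})$ to each vertex, and the normalization constraints say that the total weight on each maximal clique corresponding to a measurement context equals $1$; hence $\mathcal{P}$ gives a point in the fractional-vertex-packing polytope of $G$ (equivalently a nonnegative weighting respecting the clique-covering constraints). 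The key known separation is: quantum boxes are exactly those whose weightings arise from the Lov\'asz-type ``theta body'' / projective-representation construction, and in particular every event $(\textbf{a}|\textbf{x})$ with nonzero quantum probability is supported by an orthonormal representation of $G$ assigning it a unit-norm vector orthogonal to all its neighbours. The plan is to show that an \emph{extremal} no-signaling box forces the support weighting to be ``too rigid'' for any such orthonormal representation to exist unless the box is already deterministic.

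**Key steps, in order.**
First, recall from Fact~1 of the Supplemental Material that a vertex $\mathcal{P}^{nl}$ is the unique solution of the subsystem $\tilde{A}(\mathcal{P}^{nl})\cdot|\mathcal{P}^{nl}\rangle = |\tilde{b}\rangle$, i.e.\ it saturates enough independent constraints (non-negativity, normalization, no-signaling) to be pinned down as a single point. Second, I would translate this rigidity into a statement about the support: let $S = \{(\textbf{a}|\textbf{x}) : P^{nl}(\textbf{a}|\textbf{x}) > 0\}$ be the set of events with nonzero weight, and observe that the non-saturated non-negativity constraints are exactly those indexed by $S$, so the normalization plus no-signaling equalities, restricted to the variables indexed by $S$, must have $\mathcal{P}^{nl}|_S$ as their \emph{unique positive solution}. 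Third, suppose for contradiction that $\mathcal{P}^{nl} \in \mathrm{cl}(\mathbf{Q}(n,m,k))$; by a standard closure/compactness argument (the quantum value is achieved or approached by states and projective measurements) one obtains vectors $|v_{\textbf{a}|\textbf{x}}\rangle$ (from the measurement operators acting on a purification of $\rho$) forming an orthonormal representation of the exclusivity graph with $\langle v_{\textbf{a}|\textbf{x}} | v_{\textbf{a}|\textbf{x}}\rangle$ matching $P^{nl}(\textbf{a}|\textbf{x})$ up to the handle vector, so that in particular the vectors indexed by $S$ span a space whose geometry is dictated by the orthogonality pattern of $S$ in $G$. Fourth — the crux — I would argue that the existence of this representation provides a \emph{continuous family} of no-signaling boxes through $\mathcal{P}^{nl}$ (obtained e.g.\ by small perturbations of the handle vector or of the state within the span, which preserve no-signaling because no-signaling for quantum-realized boxes is automatic), all supported on $S$; this contradicts uniqueness from step two unless the representation is $0$/$1$-valued, i.e.\ $\mathcal{P}^{nl}$ is deterministic, hence in $\mathbf{C}(n,m,k)$, contradicting $\mathcal{P}^{nl}\notin\mathbf{C}(n,m,k)$.

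**Main obstacle.**
The delicate point is step four: making precise that a genuine (non-deterministic) quantum realization of a box supported on $S$ is never \emph{isolated} among no-signaling boxes supported on $S$. One must rule out the possibility that the orthogonality constraints of $G$ restricted to $S$ are so tight that the quantum representation is forced to be rigid in exactly the same way the no-signaling vertex is. I expect the resolution to come from a dimension/degrees-of-freedom count: a projective quantum model has a global unitary (and state) gauge freedom that, whenever some probability lies strictly between $0$ and $1$, can be used to move the box inside $\mathbf{NS}$ while staying on the face defined by $S$ — essentially because the quantum set is a ``round'' (non-polytopal) body and its intersection with any face of $\mathbf{NS}$ that contains a non-classical point has nonzero dimension. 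Handling the closure $\mathrm{cl}(\mathbf{Q})$ rather than $\mathbf{Q}$ itself adds a routine but necessary limiting argument, since the limiting box must still saturate exactly the constraints that are closed conditions, and one checks the support cannot collapse in the limit without the box becoming classical.
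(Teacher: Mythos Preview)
Your overall setup through step three is sound and matches the paper's framework: orthogonality graph, vertex rigidity via the tight constraints, and the orthonormal-representation picture of (almost-)quantum boxes. The gap is entirely in step four, and it is a real one.

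The claim you need is: \emph{if a non-deterministic box $\mathcal{P}$ with support $S$ is quantum-realized, then there exists another no-signaling box with support contained in $S$}. But unwinding this, it is exactly the contrapositive of the theorem: a box is the unique NS point with support $\subseteq S$ precisely when it is a vertex, so your claim reads ``a non-deterministic NS vertex is not quantum.'' Your proposed mechanism (gauge freedom, perturbing the handle vector) does not supply an independent proof. Global unitary gauge leaves the box invariant by definition. Perturbing the state $|\psi\rangle$ while keeping the projectors does give another quantum (hence NS) box, but to keep the support inside $S$ you must keep $|\psi'\rangle$ in the intersection of the kernels of all projector products corresponding to zero events; nothing prevents that intersection from being one-dimensional, in which case every such $|\psi'\rangle$ is a scalar multiple of $|\psi\rangle$ and the box does not move. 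Your ``roundness'' heuristic (``the quantum set is non-polytopal, so its intersection with a face containing a non-classical point has positive dimension'') is false in general for convex bodies inside polytopes, and you are applying it to a $0$-dimensional face, where it has no content.

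The paper sidesteps this entirely by producing an explicit separating certificate rather than a perturbation. It uses the Fujie--Tamura description of the theta body,
\[
TH(G)=\Bigl\{\,|\mathcal{P}\rangle:\ \langle\mathcal{P}|M|\mathcal{P}\rangle-\textstyle\sum_i M_{ii}|\mathcal{P}\rangle_i\le 0\ \text{for all }M\in\mathbb{M}\,\Bigr\},
\]
where $\mathbb{M}$ consists of PSD symmetric matrices with zeros on non-edges. From the tight-constraint matrix $\tilde A$ at the vertex one forms $\tilde M=\tilde A^{T}\tilde A$, which lies in $\mathbb{M}$ and is \emph{positive definite} because $\tilde A$ has full rank (this is exactly where the vertex hypothesis enters). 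One checks the Fujie--Tamura inequality for $\tilde M$ is tight at $\mathcal{P}^{nl}$; then, picking any index $j$ with $0<|\mathcal{P}^{nl}\rangle_j<1$ (which exists since the vertex is non-classical), the matrix $\tilde M'=\tilde M-\varepsilon I_{j,j}$ is still PSD for small $\varepsilon>0$ and yields
\[
\langle\mathcal{P}^{nl}|\tilde M'|\mathcal{P}^{nl}\rangle-\textstyle\sum_i \tilde M'_{ii}|\mathcal{P}^{nl}\rangle_i=-\varepsilon\,|\mathcal{P}^{nl}\rangle_j\bigl(|\mathcal{P}^{nl}\rangle_j-1\bigr)>0,
\]
so $\mathcal{P}^{nl}\notin TH(G)\supseteq TH^{(c)}(G_{\mathcal B})=\tilde{\mathbf Q}\supseteq\mathrm{cl}(\mathbf Q)$. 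The positive-definiteness of $\tilde M$ is doing the work your ``dimension/gauge'' count was meant to do, but in a way that is actually checkable.
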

\textit{Sketch of proof.} We first consider the hierarchy of outer semidefinite programming relaxations of the quantum set introduced in \cite{NPA}. In particular, one set of the hierarchy called the \textit{almost quantum} set $\mathbf{\tilde{Q}}(n,m,k)$ \cite{AQ} is identified following similar arguments to \cite{Fritz} with a convex set $TH^c(G)$, which is the set $TH(G)$ from graph theory with extra \textit{clique equalities} $C_{n,ns}$ added to take care of the normalization and no-signaling constraints. Here, $G$ is the orthogonality graph \cite{Winter} associated to the Bell scenario $\mathcal{B}(n,m,k)$, for a description of its construction as well as the associated notion of an orthogonal representation of a graph, see the detailed proof in the Supplemental Material. The set $TH^c(G) = \mathbf{\tilde{Q}}(n,m,k)$ is therefore defined as follows:
\begin{eqnarray}
TH^c(G) & \defeq &\{\mathcal{P} = (|\langle \psi |u_i \rangle|^2: i \in V) \in \mathbb{R}_{+}^{V}: \nonumber \\
&&\left \| \psi \right \| = \left \| |u_i \rangle \right\| = 1, \nonumber \\
&& \{|u_i \rangle\} \text{is an orthogonal representation of G}, \nonumber \\
&&\forall c \in C_{n, ns}, \; \; \sum_{i \in c} |\langle \psi | u_i \rangle|^2 = 1.\}
\end{eqnarray} 
The set $TH(G)$ was studied as a semidefinite programming relaxation to the stable set polytope $STAB(G)$ from graph theory in \cite{Lovasz-2}.

From Fact 1 in the Supplemental Material, the non-local vertices of $NS(n,m,k)$ 
must necessarily contain non-integral entries.
In particular, we know that to every non-local vertex $P$ there is an associated matrix $\tilde{A}(P)$ of rank equal to the dimension of the polytope $\prod_{i=1}^{n} \left[ m_i (k_i - 1) +1 \right] - 1$ such that $\tilde{A}(P) \cdot |\mathcal{P} \rangle = | \tilde{\mathit{b}} \rangle$. Now, the set $TH(G)$ is known to admit an alternative characterization in terms of an infinite number of inequalities \cite{Fujie-Tamura}, and one can show that every non-local vertex violates one of these inequalities. Furthermore, the minimum eigenvalue of the positive definite matrix $\tilde{M}(P) \defeq \tilde{A}(P)^T \tilde{A}(P)$ provides a lower bound on this violation. This argument is similar to arguments in \cite{Shepherd, Fujie-Tamura, Marcel-2}, where it was shown that the set $TH(G)$ does not reach the non-integral vertices of $QSTAB(G)$ which is the polytope arising from the \textit{linear programming relaxation} to $STAB(G)$.  
An additional argument then shows that $\mathbf{\tilde{Q}}(n,m,k)$ is a closed set, and since the hierarchy converges to the closure of the quantum set $\text{cl}(\mathbf{Q}(n,m,k))$, one can exclude non-local vertices from it. This ends the sketch of the proof.


{\it Non-local games with no quantum winning strategy.} 
An immediate application of Theorem \ref{non-local-vertex} is that if one is able to identify a non-local game that has a single unique winning no-signaling strategy (where by a winning strategy we mean one that achieves maximal value $1$), then such a strategy (being a vertex) cannot be realized in quantum theory. So that denoting by $\omega_q(\textsl{g})$ the quantum value of the game, 
we have $\omega_q(\textsl{g}) \neq 1$. 

Here, we consider a class of Bell inequalities known as total \textit{unique games} ($\textsl{g}^U$) for multiple players, which are a family of games of interest in the field of hardness of approximation (in determining the algorithmic complexity of finding close to optimal solutions for optimization problems) \cite{Khot}. A unique game is defined by the following winning condition: for each $\textbf{x}$ and each set of outcomes of any chosen $n-1$ parties, $\textbf{a}_{n-1}^{(j)} = \textbf{a} \setminus a_j \; \forall j$, the remaining party is required to output a single unique $a_j$ specified by a function $a_j = \sigma_{\textbf{x}}^{(j)}\left(\textbf{a}_{n-1}^{(j)} \right)$ with $\sigma_{\textbf{x}}^{(j)}\left(\textbf{a}_{n-1}^{(j)} \right) \neq \sigma_{\textbf{x}}^{(j)}\left(\textbf{a'}_{n-1}^{(j)}\right)$ for $\left(\textbf{a}_{n-1}^{(j)} \neq \textbf{a'}_{n-1}^{(j)}) \right)$; the term total refers to the fact that such a winning constraint is imposed for every set of inputs $\textbf{x}$. We give a simple condition in Lemma \ref{vert-conn-graph} for a total function unique game to be won by a single no-signaling box in terms of the connectivity of a certain graph that we associate to the game. 


\begin{lemma}
\label{vert-conn-graph}
A total multi-player unique game $\textsl{g}^{U}$ is won by a single unique non signaling box if and only if the no-signaling graph $G_{NS}(\textsl{g}^U)$ associated to the game is connected; for such a game $\omega_q(\textsl{g}^{U}) \neq 1$.
\end{lemma}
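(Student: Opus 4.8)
\textit{Proof strategy.} The plan is to translate ``winning no-signaling box'' into a linear-algebraic statement about $G_{NS}(\textsl{g}^{U})$ and then invoke Theorem \ref{non-local-vertex}. First I would note that a no-signaling box $\mathcal{P}$ wins $\textsl{g}^{U}$ with certainty iff it is supported on winning events, i.e. $P(\textbf{a}\mid\textbf{x})=0$ unless $a_j=\sigma^{(j)}_{\textbf{x}}\!\left(\textbf{a}^{(j)}_{n-1}\right)$ for all $j$. For such a box I would use the no-signaling conditions and the bijectivity of each $\sigma^{(j)}_{\textbf{x}}$ to show that $\mathcal{P}$ is determined by a finite list of nonnegative ``marginal weights'': on the winning support the outputs of any $n-1$ parties fix the last output, and the associated $(n-1)$-party marginal does not depend on the discarded input, so iterating collapses the box onto these weights. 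The graph $G_{NS}(\textsl{g}^{U})$ is, by its definition, the graph recording which of these weights the winning condition plus no-signaling equates; the crucial feature to establish is that, since each $\sigma^{(j)}_{\textbf{x}}$ is a bijection, the winning constraints between any two parties at fixed inputs form a perfect matching of their output alphabets, so every input--output fibre contributes the same number of output labels to each connected component of $G_{NS}(\textsl{g}^{U})$ (a fibre-independent ``component profile'').

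Granting that, I would argue that the winning no-signaling boxes are in bijection with assignments of one weight $v_C\geq 0$ to each connected component $C$, subject to the single normalization $\sum_C n_C v_C = 1$ --- the fibre-independent profile is precisely what collapses all the per-fibre normalizations into one equation. For $n=2$ this is immediate: the box is $P(a_1 a_2\mid x_1 x_2)=v_{C(1,x_1,a_1)}\,[\,a_2=\sigma_{x_1 x_2}(a_1)\,]$, which one checks is a valid no-signaling box for every such assignment; the case $n\geq 3$ goes the same way with multipartite reduced correlations replacing single-party ones. Using the quoted fact that every total unique game has no-signaling value $1$, this set is nonempty, hence it is an affine polytope of dimension (number of connected components of $G_{NS}(\textsl{g}^{U})$) $-\,1$. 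Therefore it is a single point --- $\textsl{g}^{U}$ is won by a unique box $\mathcal{P}^{\star}$ --- iff $G_{NS}(\textsl{g}^{U})$ is connected, in which case $v_C=1/k$ for the unique component and $\mathcal{P}^{\star}$ has uniform one-party marginals $P(a_j\mid x_j)=1/k$ (one assumes $k\geq 2$, a one-output game being trivial).

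For the last clause I would observe that the set of winning boxes is the face $F=\{\mathcal{P}\in\mathbf{NS}(n,m,k):\langle\Bell,\mathcal{P}\rangle=1\}$, where $\langle\Bell,\cdot\rangle$ is the game functional and, by the quoted fact, $1$ is its no-signaling maximum; a face that is a single point is a $0$-dimensional face, hence a vertex of $\mathbf{NS}(n,m,k)$. That vertex is non-local: were $\mathcal{P}^{\star}\in\mathbf{C}(n,m,k)$ it would be a convex combination of deterministic boxes, and since $\langle\Bell,\cdot\rangle$ attains its maximal value $1$ at $\mathcal{P}^{\star}$, every box appearing in that decomposition must also achieve value $1$ and hence lie in $F=\{\mathcal{P}^{\star}\}$, forcing $\mathcal{P}^{\star}$ itself to be deterministic, which contradicts its one-party marginals being $1/k$ with $k\geq 2$. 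Hence $\mathcal{P}^{\star}$ is a non-local vertex of $\mathbf{NS}(n,m,k)$, and Theorem \ref{non-local-vertex} yields $\mathcal{P}^{\star}\notin\text{cl}(\mathbf{Q}(n,m,k))$, so no quantum box reaches value $1$: $\omega_q(\textsl{g}^{U})\neq 1$.

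The main obstacle I anticipate is the reduction in the first paragraph when $n\geq 3$: pinning down exactly which multipartite reduced correlations parametrize the winning boxes, and checking that the equalities they are forced to satisfy are \emph{precisely} the edges of $G_{NS}(\textsl{g}^{U})$, so that connected components match up one-to-one with free weights. Once the fibre-independence of the component profile is in hand, the normalization collapses to one linear constraint and ``connected $\Leftrightarrow$ unique'' is immediate, while the closing implications (singleton face $\Rightarrow$ vertex; non-local vertex $\Rightarrow$ outside $\text{cl}(\mathbf{Q})$ by Theorem \ref{non-local-vertex}) are routine.
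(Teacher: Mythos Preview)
Your proposal is correct and follows the same route as the paper: both identify edges of $G_{NS}(\textsl{g}^U)$ with equalities among winning-event probabilities forced by no-signaling (via the bijectivity built into the unique-game constraint), conclude that connectivity forces the unique uniform winning box while disconnectedness yields one winning box per component, and then invoke Theorem~\ref{non-local-vertex}. You are more explicit on two points the paper leaves tacit---the singleton-face $\Rightarrow$ vertex step and the non-locality check via the uniform one-party marginals---and the obstacle you flag for $n\geq 3$ (fibre-independence of the component profile) is exactly the statement the paper asserts without proof, namely that each component $G_c$ meets every input setting $\textbf{x}$ in the same number $k_c$ of vertices.
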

We illustrate the above application with an example of the generalized PR box and show that the corresponding generalized CHSH game cannot be won using a quantum strategy (the proofs of Lemma \ref{vert-conn-graph} and Example \ref{gen-pr-box} are given in the Supplemental Material).
\begin{example}
\label{gen-pr-box}
The unique game $\textsl{g}^{U, ex}$ for $n$ parties with inputs $x_i \in \{0,1\}$ and outputs $a_i \in \{0, \dots, k-1\}$ defined by the winning constraint: $\oplus_{i=1}^{n} a_i \; \text{(mod k)} = \prod_{i=1}^{n} x_i$ is won by a single no-signaling vertex for any $n, k$. 
\end{example}

{\it Exclusion of boxes by distilling many copies to a vertex.} The fact that non-local extremal points of the no-signaling set in a single-shot scenario cannot be realized using quantum resources can also be used to exclude from the quantum set those boxes which can be distilled (with many copies in a sequential scenario) to such a vertex. In other words, if there exists a distillation protocol that deterministically maps a certain number of copies of a box $\mathcal{P}$ to a non-local vertex, then $\mathcal{P} \notin \mathbf{Q}(n,m,k)$.

As an illustration, consider the generalized PR-box presented in the Example \ref{gen-pr-box}, defined as
\begin{equation}
\label{nl-vert-ex}
\mathcal{P}^{nl, ex}_{n}(\textbf{a} | \textbf{x}) =  \left\{ 
\begin{array}{ l r } 
\frac{1}{k^{n-1}} : \oplus_{i=1}^{n} a_i \; \text{(mod k)} = \prod_{i=1}^{n} x_i \\
0 \; \; \quad :  \text{otherwise} \\
 \end{array} 
 \right.
\end{equation}
We now examine an admixture of the above non-local vertex with a locally correlated box $\mathcal{P}^{c,ex}$ defined as 
\begin{equation}
\mathcal{P}^{c,ex}_{n}(\textbf{a} | \textbf{x}) =  \left\{ 
\begin{array}{ l r } 
\frac{1}{k^{n-1}} : \oplus_{i=1}^{n} a_i \; \text{(mod k)} = 0 \\
0 \; \; \quad :  \text{otherwise} \\
 \end{array} 
 \right.
\end{equation}
A generalization of the distillation protocols in \cite{BS2009}, \cite{EW2014}, gives that for any $\epsilon \in (0,1)$, many copies of the box 
\begin{equation}
\label{nl-mix}
\mathcal{P}_{\epsilon, n} = \epsilon \mathcal{P}^{nl,ex}_{n} + (1 - \epsilon) \mathcal{P}^{c,ex}_{n}
\end{equation}
can be distilled to the vertex $\mathcal{P}^{nl,ex}_{n}$, so that the box $\mathcal{P}_{\epsilon, n}$ for arbitrary $\epsilon (> 0), n$ (arbitrarily close to the classical polytope) is immediately excluded from the quantum set. 
\begin{lemma}
For any $\epsilon \in (0,1)$ there exists a deterministic distillation protocol $\mathbb{D}$ that maps $\mathcal{P}_{\epsilon, n}^{\otimes r}$ to $\mathcal{P}^{nl,ex}_{n}$ for large $r$, i.e., $\mathbb{D}: \mathcal{P}_{\epsilon, n}^{\otimes r} \xrightarrow{r \rightarrow \infty} \mathcal{P}^{nl,ex}_{n}$. 
\end{lemma}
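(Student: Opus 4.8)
\textit{Proof strategy.} The plan is to build $\mathbb{D}$ by iterating a single two-copy local wiring. First I would rewrite one copy of $\mathcal{P}_{\epsilon,n}$ as a hidden-bit mixture: the box flips $\lambda=1$ with probability $\epsilon$ (then behaving as $\mathcal{P}^{nl,ex}_{n}$) and $\lambda=0$ with probability $1-\epsilon$ (then behaving as $\mathcal{P}^{c,ex}_{n}$), so that in either case, on party inputs $x_1,\dots,x_n\in\{0,1\}$, the outputs are uniformly random in $\mathbb{Z}_k$ subject only to $\oplus_{i=1}^n a_i=\lambda\prod_{i=1}^n x_i\pmod k$; moreover the $\lambda$'s of distinct copies are mutually independent and independent of the (possibly adaptively chosen) inputs. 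The goal is then to drive the ``effective weight'' $\epsilon$ to $1$, since $\mathcal{P}_{1,n}=\mathcal{P}^{nl,ex}_{n}$.

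\textit{The two-copy wiring $\mathbb{D}_2$.} Each party $i$, on input $x_i$, feeds $x_i$ into copy $1$ and receives $a_i^{(1)}$; it then feeds into copy $2$ the bit $y_i$ equal to $x_i$ if $a_i^{(1)}=0$ and equal to $0$ otherwise, receiving $a_i^{(2)}$; finally it outputs $a_i:=a_i^{(1)}+a_i^{(2)}\pmod k$. This uses only each party's own input and outputs, so it is a deterministic local wiring and the output box is automatically no-signaling. I would then prove $\mathbb{D}_2(\mathcal{P}_{\epsilon,n}\otimes\mathcal{P}_{\epsilon,n})=\mathcal{P}_{g(\epsilon),n}$ with $g(\epsilon)=\epsilon+(1-\epsilon)\epsilon\,k^{-(n-1)}$, by a case analysis on the global input. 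If some $x_{i_0}=0$, then copy $1$ forces $\oplus_i a_i^{(1)}=0$ and party $i_0$ also feeds $0$ into copy $2$, giving $\oplus_i a_i^{(2)}=0$ and hence $\oplus_i a_i=0$; independence of the copies yields the correct uniform conditional, matching $\mathcal{P}_{g(\epsilon),n}$ (whose two summands agree on such inputs). If $\textbf{x}$ is all-ones, every party feeds $1$ into copy $1$, so $\oplus_i a_i^{(1)}=\lambda_1$, whereas copy $2$ receives the all-ones input exactly when $a_i^{(1)}=0$ for every $i$ --- an event which is impossible when $\lambda_1=1$ (the $a_i^{(1)}$ then sum to $1\neq 0$) and occurs with probability $k^{-(n-1)}$ when $\lambda_1=0$; therefore $\oplus_i a_i$ equals $1$ with probability $\epsilon+(1-\epsilon)\epsilon\,k^{-(n-1)}=g(\epsilon)$ and $0$ otherwise, and in each sub-case the output tuple is uniform subject to its sum (a shift of a uniform fixed-sum tuple is again uniform), so the box equals $\mathcal{P}_{g(\epsilon),n}$ on this input as well.

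\textit{Iteration and the main obstacle.} With $\epsilon_0=\epsilon$ and $\epsilon_{t+1}=g(\epsilon_t)$ one has $g(\epsilon)-\epsilon=(1-\epsilon)\epsilon\,k^{-(n-1)}>0$ and $g(\epsilon)<1$ for $\epsilon\in(0,1)$, so $(\epsilon_t)$ is strictly increasing and bounded by $1$, hence converges to a fixed point of $g$ in $(\epsilon,1]$, which can only be $1$. Taking $\mathbb{D}$ to apply $\mathbb{D}_2$ over a binary tree of depth $m$ to $r=2^m$ identical copies then produces exactly $\mathcal{P}_{\epsilon_m,n}$, and $\|\mathcal{P}_{\epsilon_m,n}-\mathcal{P}^{nl,ex}_{n}\|=(1-\epsilon_m)\,\|\mathcal{P}^{c,ex}_{n}-\mathcal{P}^{nl,ex}_{n}\|\to 0$ as $r\to\infty$. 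I expect the main obstacle to be getting the gate in $\mathbb{D}_2$ exactly right: it must guarantee that copy $2$ contributes nothing once copy $1$ has already realized the nonlocal constraint, which forces the choice ``$y_i=0$'' rather than the naive product of input bits (the two-bit wiring of \cite{BS2009} does not extend past $k=2$), and one must then verify that the whole one-parameter family $\{\mathcal{P}_{\epsilon,n}\}$ is closed under $\mathbb{D}_2$; once that is in place the convergence $\epsilon_t\to 1$ is routine.
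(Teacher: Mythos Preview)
Your proposal is correct and is essentially the paper's own proof. Your two-copy wiring---feed $x_i$ into copy~1, then $y_i=x_i$ if $a_i^{(1)}=0$ and $y_i=0$ otherwise into copy~2, and output $a_i^{(1)}\oplus_k a_i^{(2)}$---is exactly the paper's rule $x_i^{(2)}=x_i\cdot\mathsf{I}(a_i^{(1)}\oplus_k 1)$, and your recursion $g(\epsilon)=\epsilon+\epsilon(1-\epsilon)k^{-(n-1)}$ coincides with the paper's $\epsilon'=\frac{\epsilon}{k^{n-1}}(k^{n-1}+1-\epsilon)$; the only cosmetic difference is that you phrase the case analysis via independent hidden bits $\lambda_1,\lambda_2$ whereas the paper splits $\mathcal{P}_{\epsilon,n}^{\otimes 2}$ into the four product terms, and you spell out the monotone-convergence step that the paper leaves implicit.
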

While not all super-quantum boxes can be excluded by distillation to a vertex \cite{AQ}, the above example still motivates the study of non-locality distillation as a powerful tool for exclusion of super-quantum boxes in general Bell scenarios. 

{\it Extensions of non-extremal boxes.}
Perhaps the most crucial consequence of Theorem \ref{non-local-vertex} is that it rules out the most straightforward device-independent protocol for obtaining secure keys or random bits, as we now demonstrate. Extremal non-local boxes of the no-signaling polytope by definition cannot be correlated with the environment, i.e., if $\mathcal{P} = \{P(\textbf{a}|\textbf{x})\}$ is an $n$-party box that is a vertex, and $\{P(\textbf{a}, b|\textbf{x}, y)\}$ is any $n+1$-party box of which $\mathcal{P}$ is the marginal, then necessarily one has 
\begin{equation}
\{P(\textbf{a}, b|\textbf{x}, y)\} = \{P(\textbf{a} | \textbf{x})\} \otimes \{P(b|y)\}.
\end{equation}  
This was first noted in \cite{BLMPPR2005}, for a proof observe that
\begin{eqnarray}
\label{vertex-decorrelated}
P(\textbf{a}|\textbf{x}) = \sum_{b} P(\textbf{a}, b|\textbf{x}, y)  
= \sum_{b} P(\textbf{a}|\textbf{x}, b, y) P(b|y),
\end{eqnarray}
where we have used the no-signaling condition $P(b | \textbf{x}, y) = P(b|y)$. 
Since $\{P(\textbf{a}|\textbf{x})\}$ is a vertex, the decomposition in Eq.(\ref{vertex-decorrelated}) can only have a single term, so that for all $b, y$, we have $P(\textbf{a}|\textbf{x}, b, y) = P(\textbf{a} | \textbf{x})$, which translates to 
\begin{equation}
\{P(\textbf{a}, b|\textbf{x}, y)\} = \{P(\textbf{a} | \textbf{x}, b, y) P(b|y)\} = \{P(\textbf{a}|\textbf{x})\} \otimes \{P(b|y)\}.
\end{equation} 
Access to an extremal box $\mathcal{P}$ would thus provide a tremendous advantage to the honest parties in cryptographic tasks where one considers the presence of an all-powerful adversary limited only by the relativistic causality constraint imposed by the no-signaling principle. For instance, in a noiseless scenario, one can readily design a protocol where the honest parties check for the presence of the extremal box in a device-independent manner and then obtain the intrinsic randomness from the box or use it to distribute secure keys, knowing that they are uncorrelated with the eavesdropper's system. Theorem \ref{non-local-vertex} is thus a no-go result against this straightforward protocol. On the other hand, if $\mathcal{P}$ is not a vertex, then Eq.(\ref{vertex-decorrelated}) shows that there can always be (classical) correlations with the eavesdropper's system. 

A natural question then arises (first raised in \cite{BLMPPR2005}) whether any box $\mathcal{P}$ that is in the interior of the no-signaling polytope also allows for \textit{non-local} correlations with an eavesdropper, i.e., if there is an extension into a box $\bar{\mathcal{P}}_{n+1}(\textbf{a}, e|\textbf{x}, z)$ such that $\mathcal{P}_{n}(\textbf{a}|\textbf{x})$ is obtained as the marginal of $\bar{\mathcal{P}}_{n+1}(\textbf{a}, e|\textbf{x}, z)$ upon tracing out the eavesdropper's system, and $\bar{\mathcal{P}}_{n+1}(\textbf{a}, e|\textbf{x}, z)$ exhibits non-locality in the cut $(A_1, \dots, A_n) | E$. This is especially interesting not only due to the fact that non-local correlations impart additional power to the eavesdropper, but that the analogous fact in quantum theory holds, namely a mixed state of $n$ systems can always be extended into an \textit{entangled} state of $n+1$ systems, i.e., $\rho_{n} = \sum_{i} \lambda_i |i^{\otimes n} \rangle \langle i^{\otimes n} |$ can be extended to $|\psi_{n+1} \rangle = \sum_{i} \sqrt{\lambda_i} |i \rangle^{\otimes {n+1}}$ where the $|i \rangle$ constitute an orthonormal basis, and $\lambda_i > 0$ are the corresponding eigenvalues. For the generalized PR box from Eq.(\ref{nl-vert-ex})), which corresponds to the Bell scenario $\mathcal{B}(n,2,k)$, we show that any box obtained by admixing the vertex with the completely noisy box $\mathcal{P}^{\identity}_{n}$ with entries $\mathcal{P}^{\identity}_{n}(\textbf{a}|\textbf{x}) = \frac{1}{k^n} \; \forall \textbf{a}, \textbf{x} \;$ 
\textit{does} admit an extension into an $n+1$ party box $\bar{\mathcal{P}}_{n+1}(\textbf{a},e|\textbf{x},z)$ with $e \in \{0,\dots,k-1\}$ and $z \in \{0,1\}$ such that the bipartite cut $(A_1,\dots,A_n) \vert E$ exhibits non-local correlations, i.e., we now examine the isotropic boxes of the form  
\begin{equation}
\label{nl-mix-2}
\mathcal{P}^{\text{iso}}_{\epsilon,n} = \epsilon \mathcal{P}^{nl,ex}_{n} + (1-\epsilon) \mathcal{P}^{\identity}_{n}.
\end{equation} 
We note that the extremal points of the Bell scenario $\mathcal{B}(2,2,k)$ have been characterized in \cite{BLMPPR2005} where it was shown that all the extremal boxes are of the form given in Eq. (\ref{nl-vert-ex}) for $n=2$ up to a local relabeling of the inputs and outputs. 
 
\begin{lemma}
For the Bell scenario $\mathcal{B}(n,2,k)$, for any $\epsilon \in [0,1)$ the box $\mathcal{P}^{\text{iso}}_{\epsilon, n}$ admits a non-local extension to a box $\bar{\mathcal{P}}_{n+1}(\textbf{a},e|\textbf{x},z)$, with $a_i, e \in \{0,\dots,k-1\}, x_i ,z \in \{0,1\}$. 
\end{lemma}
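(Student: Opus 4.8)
\textit{Proof strategy.} The plan is to reduce the construction to a two‑party ``core'' box and then lift it to $n+1$ parties. The key observation is that $\mathcal{P}^{\text{iso}}_{\epsilon,n}(\textbf{a}|\textbf{x})$ depends on $(\textbf{a},\textbf{x})$ only through the syndrome $s(\textbf{a},\textbf{x}):=\bigl(\oplus_{i=1}^{n}a_i\bigr)-\bigl(\prod_{i=1}^{n}x_i\bigr)\ (\mathrm{mod}\ k)$, taking the value $k^{-(n-1)}q_{s}$ with $q_0=\epsilon+\tfrac{1-\epsilon}{k}$ and $q_{s}=\tfrac{1-\epsilon}{k}$ for $s\neq0$. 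We therefore (1) build a no‑signaling box $\hat{R}_\epsilon$ on the scenario $\mathcal{B}(2,2,k)$, with inputs $y,z\in\{0,1\}$ and outputs $s,e\in\{0,\dots,k-1\}$, such that $\sum_{e}\hat{R}_\epsilon(s,e|y,z)=q_{s}$ for each $y$ and $\hat{R}_\epsilon$ is non‑local; and then (2) set $\bar{\mathcal{P}}_{n+1}(\textbf{a},e|\textbf{x},z):=k^{-(n-1)}\,\hat{R}_\epsilon(s(\textbf{a},\textbf{x}),e\,|\,\prod_{i=1}^{n}x_i,z)$, i.e.\ we expand each syndrome uniformly over the $k^{n-1}$ output strings $\textbf{a}$ that realise it.

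For step (1) we will take $\hat{R}_\epsilon=(1-\epsilon)\,\mathcal{P}_{\mathrm{PR}}+\epsilon\,D$, where $\mathcal{P}_{\mathrm{PR}}(s,e|y,z)=\tfrac1k\mathbf{1}[s\oplus e=yz]$ is the generalized PR box of $\mathcal{B}(2,2,k)$ (the $n=2$ case of Eq.~(\ref{nl-vert-ex}) after relabeling) and $D(s,e|y,z)=\mathbf{1}[s=0]\,\mathbf{1}[e=0]$ is the constant box. Then $\sum_{e}\hat{R}_\epsilon=(1-\epsilon)\tfrac1k+\epsilon\,\mathbf{1}[s=0]=q_{s}$ as required, and $\hat{R}_\epsilon$ is no‑signaling by inspection. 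For non‑locality we use the non‑local game with uniform inputs $y,z$ and winning predicate $s\oplus e=yz$: its classical value is $\tfrac34$, since for any deterministic strategy $s=f(y),\,e=g(z)$ the first three winning conditions force $f(0)\equiv f(1)$ and $g(0)\equiv g(1)\ (\mathrm{mod}\ k)$, whence $f(1)+g(1)\equiv f(0)+g(0)\equiv0$, which contradicts the fourth condition $f(1)+g(1)\equiv1$; hence at most three of the four inputs can be won, and three are won by $f\equiv g\equiv0$. Since $\mathcal{P}_{\mathrm{PR}}$ wins with probability $1$ and $D$ wins with probability $\tfrac34$, the box $\hat{R}_\epsilon$ wins with probability $1-\tfrac{\epsilon}{4}>\tfrac34$ for every $\epsilon\in[0,1)$, so it is non‑local.

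It then remains to verify three things about the lift of step (2). First, $\bar{\mathcal{P}}_{n+1}$ is a valid no‑signaling box: non‑negativity and normalization are immediate, and each no‑signaling condition reduces — by grouping the output strings $\textbf{a}$ by their syndrome, which converts the uniform expansion into a plain sum over $s\in\{0,\dots,k-1\}$ — to the corresponding no‑signaling condition for $\hat{R}_\epsilon$. Second, $\sum_{e}\bar{\mathcal{P}}_{n+1}(\textbf{a},e|\textbf{x},z)=k^{-(n-1)}q_{s(\textbf{a},\textbf{x})}=\mathcal{P}^{\text{iso}}_{\epsilon,n}(\textbf{a}|\textbf{x})$, independent of $z$, so $\bar{\mathcal{P}}_{n+1}$ is indeed an extension of $\mathcal{P}^{\text{iso}}_{\epsilon,n}$. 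Third, non‑locality across the cut $(A_1,\dots,A_n)\,|\,E$: applying on the $A$‑block the deterministic wiring that restricts the inputs to $\textbf{x}\in\{0^n,1^n\}$ (relabeled as $y\in\{0,1\}$) and coarse‑grains the output to its syndrome $s(\textbf{a},\textbf{x})$ maps $\bar{\mathcal{P}}_{n+1}$ exactly onto $\hat{R}_\epsilon$; since wirings cannot create non‑locality, non‑locality of $\hat{R}_\epsilon$ forces non‑locality of $\bar{\mathcal{P}}_{n+1}$ in this cut.

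The step needing real care — and the one genuinely new point — is the choice of the decoupled summand $D$. The obvious guess, $\bar{\mathcal{P}}_{n+1}=\epsilon\bigl(\mathcal{P}^{nl,ex}_{n}\otimes u\bigr)+(1-\epsilon)\mathcal{P}^{nl,ex}_{n+1}$ with $u$ the uniform box on $e$, fails as $\epsilon\to1$: because $\mathcal{P}^{nl,ex}_{n}$ is a vertex, the decorrelation argument around Eq.~(\ref{vertex-decorrelated}) forces its tensor factor to split off across $(A_1,\dots,A_n)|E$, so the entire Bell violation must come from $\mathcal{P}^{nl,ex}_{n+1}$, which for $k=2$ yields CHSH value $4(1-\epsilon)$ — above the classical bound only for $\epsilon<\tfrac12$. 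Replacing the uniform $E$‑noise by the fixed output $e=0$ leaves the required marginal $q$ untouched (the syndrome in that summand is still $\equiv0$) but turns the decoupled term into one that \emph{saturates} a facet of the local polytope rather than lying in its interior, so the mixture stays strictly Bell‑violating all the way to $\epsilon=1$; what this costs is only the routine syndrome‑grouping bookkeeping needed to confirm that the full $(n{+}1)$‑party lift remains no‑signaling.
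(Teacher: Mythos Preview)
Your proof is correct and is essentially the same construction as the paper's: both take $\bar{\mathcal{P}}_{n+1}=\epsilon\,(\mathcal{P}^{nl,ex}_{n}\otimes\mathcal{P}^{d}_{1})+(1-\epsilon)\,[\text{an }(n{+}1)\text{-party PR-type box with marginal }\mathcal{P}^{\identity}_{n}]$ and certify non-locality across the cut $(A_1,\dots,A_n)\,|\,E$ by a local wiring to the bipartite CHSH game, obtaining the identical value $4-\epsilon>3$. Your syndrome-lift packaging is a clean reformulation, and your $(1-\epsilon)$-term differs only cosmetically from the paper's $\tilde{\mathcal{P}}^{nl,ex}_{n+1}$ (yours factors through $y=\prod_i x_i$ and is revealed by the wiring $\textbf{x}\in\{0^n,1^n\}$, whereas the paper twists the inputs via $x_1(x_2\oplus_2 1)\cdots(x_n\oplus_2 1)z$ and restricts to $x_2=\dots=x_n=0$), but the idea and the numerical outcome coincide.
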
 
We leave as an open question whether other boxes in the interior of the no-signaling polytope also admit non-local extensions.

{\it Conclusions.} We have seen that the non-local vertices of the no-signaling polytope of correlations admit no quantum realization and explored some consequences of this fact. An interesting question for future study is how close the quantum set can get to such a vertex, and whether a universal finite lower bound on this distance can be found. Since the appearance of this paper, finite lower bounds on the distance in the $(2,2,k)$ and $(2,m,2)$ scenarios have been discovered \cite{Julio}. In this regard, it is pertinent to note that for Bell inequalities with algebraic violation (such as the GHZ paradoxes \cite{GHZ}) or close to algebraic violation (such as the chained Bell inequalities \cite{BC}), the quantum set only reaches the corresponding facet of the no-signaling polytope. Another question for further research is to identify other interesting classes of vertices than the one illustrated here, and construct novel distillation protocols for these. An open problem is whether all points in the interior of the no-signaling set allow for non-local correlations with the environment, and its consequences for cryptographic tasks. Finally, it would be important to extend the results in the paper to the scenario of sequential non-locality, especially from the point of view of device-independent applications. 

To address the question of finding a universal bound on the distance from the quantum set to extremal no-signaling boxes, one approach is to use the norm-based bounds developed in \cite{Julio, RRG}, a more general approach is to utilize the method of proof of the Theorem \ref{non-local-vertex} and lower bound the minimum eigenvalue of the matrix $\mathit{\tilde{A}}(\mathcal{P})^T \mathit{\tilde{A}}(\mathcal{P})$ associated with the vertex $\mathcal{P}$. Concerning non-locality distillation, recently in \cite{BG14} a systematic method to construct sets of non-local boxes that are closed under wirings has been introduced. In particular, a measure of correlations called the \textit{maximal correlation} has been shown to be monotonically decreasing under wirings. Limitations on the power of non-locality distillation to exclude super-quantum boxes may be studied by computing this measure on known classes of extremal boxes \cite{YCATS, BLMPPR2005} and their admixtures with local boxes. To address the question whether points in the interior of the no-signaling set allow for non-local correlations with the environment, it would be ideal to develop a notion of purification for no-signaling boxes, perhaps within the framework of generalized probabilistic theories \cite{Barrett}. 

{\em Acknowledgements.} This work is supported by the ERC grant QOLAPS, the EU grant RAQUEL (No. 323970), and the Foundation for Polish Science TEAM project co-financed by the EU European Regional Development Fund. R. R. acknowledges useful discussions with G. Murta, P. Mironowicz and M. Paw{\l}owski. 


{\bf Supplemental Material}

Here, we present the proofs of the lemmas and theorems stated in the main text.

The vertex of a polytope is a point such that the normal cone to the point has full dimension. Let us first state the following fact \cite{Schrijver}, which is used to characterize the extremal boxes (vertices) $\mathcal{P}$ of the convex no-signaling polytope.
\begin{fact}[\cite{Schrijver}, see also Theorem 2.5.3 in \cite{Fritz}]
\label{vertex-poly}
A box $\mathcal{P}$ is a vertex of the no-signaling polytope $\mathbf{NS}(n,m,k)$ if any only if $rank(\mathit{\tilde{A}}) = (mk)^n$ where $\mathit{\tilde{A}}$ denotes the sub-matrix of $\mathit{A}$ consisting of those row vectors $\mathit{A}_i$ for which $\mathit{A}_i \cdot | \mathcal{P} \rangle = | \mathit{b} \rangle_i$.
\end{fact}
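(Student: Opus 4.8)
The statement is the textbook characterization of extreme points of a polyhedron (see \cite{Schrijver}), and I would prove it directly from the definition of a vertex of $\mathbf{NS}(n,m,k)$ as a box admitting no non-trivial convex decomposition inside the polytope; the equivalence with the ``full-dimensional normal cone'' description stated just above is then automatic and I would only remark on it. Write $\tilde{A} = \tilde{A}(\mathcal{P})$ for the submatrix of rows $A_i$ of $A$ that are tight at $\mathcal{P}$, i.e. $A_i \cdot |\mathcal{P}\rangle = |b\rangle_i$, and $|\tilde{b}\rangle$ for the corresponding entries of $|b\rangle$, so that $\tilde{A}\cdot|\mathcal{P}\rangle = |\tilde{b}\rangle$; every remaining (``slack'') row $A_j$ satisfies $A_j\cdot|\mathcal{P}\rangle < |b\rangle_j$ strictly, and there are only finitely many rows in all. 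I would also note that the normalization and no-signaling \emph{equalities} are encoded in $A\cdot|\mathcal{P}\rangle \le |b\rangle$ as matched pairs of inequalities that are tight at \emph{every} point of the polytope, so such rows always sit inside $\tilde{A}(\mathcal{P})$; this is what makes the target rank $(mk)^n$ rather than $D+1$ consistent with $\dim\mathbf{NS}(n,m,k)=D<(mk)^n$.

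For the direction ($\Leftarrow$), assume $\rank(\tilde{A}) = (mk)^n$. Then $|\mathcal{P}\rangle$ is the \emph{unique} solution of the linear system $\tilde{A}\cdot|x\rangle = |\tilde{b}\rangle$. I would take an arbitrary decomposition $\mathcal{P} = \lambda\mathcal{P}_1 + (1-\lambda)\mathcal{P}_2$ with $\lambda\in(0,1)$ and $\mathcal{P}_1,\mathcal{P}_2\in\mathbf{NS}(n,m,k)$, and observe that for each tight row $A_i$ one has $A_i\cdot|\mathcal{P}_1\rangle \le |b\rangle_i$, $A_i\cdot|\mathcal{P}_2\rangle \le |b\rangle_i$, while $\lambda\,A_i\cdot|\mathcal{P}_1\rangle + (1-\lambda)\,A_i\cdot|\mathcal{P}_2\rangle = A_i\cdot|\mathcal{P}\rangle = |b\rangle_i$; a convex combination of two reals each $\le |b\rangle_i$ equals $|b\rangle_i$ only if both do, so $A_i\cdot|\mathcal{P}_1\rangle = A_i\cdot|\mathcal{P}_2\rangle = |b\rangle_i$. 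Hence $\mathcal{P}_1$ and $\mathcal{P}_2$ also solve $\tilde{A}\cdot|x\rangle = |\tilde{b}\rangle$, and uniqueness forces $\mathcal{P}_1 = \mathcal{P}_2 = \mathcal{P}$, so the decomposition is trivial and $\mathcal{P}$ is a vertex.

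For the direction ($\Rightarrow$), I would argue the contrapositive. If $\rank(\tilde{A}) < (mk)^n$, then $\ker\tilde{A}\neq\{0\}$; fix $|v\rangle\neq 0$ with $\tilde{A}\cdot|v\rangle = 0$. Since the finitely many slack rows are strictly satisfied at $\mathcal{P}$, there is an $\epsilon>0$ small enough that $\mathcal{P}_\pm \defeq \mathcal{P} \pm \epsilon v$ still satisfy all slack inequalities, while the tight rows stay satisfied with equality because $\tilde{A}\cdot|v\rangle = 0$; hence $\mathcal{P}_\pm\in\mathbf{NS}(n,m,k)$, with $\mathcal{P}_+\neq\mathcal{P}_-$. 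Then $\mathcal{P} = \frac{1}{2}\mathcal{P}_+ + \frac{1}{2}\mathcal{P}_-$ is a non-trivial convex decomposition, so $\mathcal{P}$ is not a vertex.

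No deep input is needed — only elementary linear algebra and finiteness of the constraint list, and in particular boundedness of $\mathbf{NS}(n,m,k)$ is never used. The only genuine care-point, which I would flag as the closest thing to an obstacle, is the feasibility step in ($\Rightarrow$): one must pick a single $\epsilon$ that preserves \emph{all} of the finitely many strict inequalities simultaneously, and one must be sure that the rows of $A$ genuinely account for the normalization and no-signaling equalities so that $\tilde{A}(\mathcal{P})$ indeed contains those (always-tight) rows. Granting the polyhedral description of $\mathbf{NS}(n,m,k)$ stated in the main text, both points are routine.
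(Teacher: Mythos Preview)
Your proof is correct and is exactly the standard polyhedral-geometry argument. The paper itself does not supply a proof of this statement: it is recorded as a \emph{Fact} with citations to \cite{Schrijver} and \cite{Fritz}, and is used only as input to the proof of Theorem~\ref{non-local-vertex}. There is therefore nothing to compare against; your write-up simply fills in the textbook argument that the paper delegates to the references.
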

For two distinct vertices $\mathcal{P}$ and $\mathcal{P'}$, the corresponding sub-matrices are not equal $\left(\mathit{\tilde{A}}(\mathcal{P}) \neq \mathit{\tilde{A}}(\mathcal{P'})\right)$ so that the matrix $\mathit{\tilde{A}}(\mathcal{P})$ can be used to identify the vertex $\mathcal{P}$. 

We also introduce the notion of a \textit{no-signaling graph} $G_{NS}(\textsl{g})$ that can be associated with any non-local game $\textsl{g}$.  
\begin{mydef}
For any non-local game $\textsl{g}$, we define the no-signaling graph $G_{NS}(\textsl{g}) = (\textsl{V}, \textsl{E})$ associated with the game to have set of vertices $\textsl{v} \in \textsl{V} $, each of which is labeled by a set of inputs and outputs that wins the game, $\textsl{v}= \left(\textbf{a}^{(\textsl{v})}, \textbf{x}^{(\textsl{v})} \right)$. Two vertices $\textsl{v}, \textsl{v'} \in \textsl{V}$ are connected be an edge if $\exists S \subseteq [n] \; with \; \left|S\right| = n-1$ such that $(a^{(\textsl{v})}_i=a^{(\textsl{v'})}_i \; \wedge \;  x^{(\textsl{v})}_i=x^{(\textsl{v'})}_i)\; \forall i \in S$. 
\end{mydef}

\textit{\textbf{Lemma 2.}
A total multi-player unique game $\textsl{g}^{U}$ is won by a single unique non signaling box if and only if the no-signaling graph $G_{NS}(\textsl{g}^U)$ associated to the game is connected; for such a game $\omega_q(\textsl{g}^{U}) \neq 1$.}
\begin{proof}
By definition for a unique game, only one term from the game constraint appears in each sum in the no-signaling constraint
\begin{eqnarray}
\label{no-signal}
\sum_{a_i} P(\textbf{a}|\textbf{x}^{(i)}) = \sum_{a_i} P(\textbf{a}|\textbf{x'}^{(i)}) \qquad \forall \textbf{a}, \textbf{x}^{(i)}, \textbf{x'}^{(i)}, i, 
\end{eqnarray}
for any $\textbf{x}^{(i)}, \textbf{x'}^{(i)}, \textbf{a}$. Hence, two vertices being connected implies that the corresponding probabilities are equal. Consequently, the no-signaling graph $G_{NS}(\textsl{g}^U)$ being connected implies that the box is uniform, with every winning event occurring with probability $\frac{1}{k}$, where $k$ is the number of outputs of the game. Since the total unique game considers all sets of inputs $\textbf{x}$, it is won by this specific no-signaling box alone, which must necessarily therefore be a vertex of the no-signaling polytope. By Theorem 1, we know that such a box cannot be realized by measurements on quantum states, so that $\omega_q(\textsl{g}^U) \neq 1$. 

On the other hand, consider that the graph $G_{NS}(\textsl{g}^{U})$ is disconnected, being composed of $c > 1$ components $G_c$. By definition of the no-signaling graph and the winning constraint of the unique game, each $G_c$ consists of the same number $k_c$ of vertices from each setting $\textbf{x}$, i.e. $k_c(\textbf{x}) = k_c(\textbf{x'})$ for any pair of $\textbf{x}, \textbf{x'}$, where $k_c(\textbf{x})$ denotes the number of vertices $\textsl{v}$ in graph $G_c$ with associated label $\left(\textbf{a}^{(\textsl{v})}, \textbf{x} \right)$. Corresponding to each $G_c$ therefore, we can construct the explicitly non signaling box $\mathcal{P}_c$ with entries $\mathcal{P}_c(\textbf{a}^{\textsl{v}} | \textbf{x}^{\textsl{v}}) = \frac{1}{k_c}$ if the vertex $(\textbf{a}^{\textsl{v}}, \textbf{x}^{\textsl{v}}) \in \textsl{V}(G_c)$, and zero otherwise. It is readily seen that each such box $\mathcal{P}_c$ wins the game.
\end{proof}
An alternative (equivalent) characterization of the total unique game with single maximally violating box is that the block matrix composed of the permutation matrices associated to the functions $\sigma_{\textbf{x}}^{(j)}$ defining the unique game in every $n-1 \vert 1$ cut is irreducible.

\textit{\textbf{Example 1.}
The unique game $\textsl{g}^{U, ex}$ for $n$ parties with inputs $x_i \in \{0,1\}$ and outputs $a_i \in \{0, \dots, k-1\}$ defined by the winning constraint: $\oplus_{i=1}^{n} a_i \; \text{(mod k)} = \prod_{i=1}^{n} x_i$ is won by a single no-signaling vertex for any $n, k$. }

\begin{proof}
We show that the no-signaling graph $G_{NS}(\textsl{g}^{U, ex})$ is connected so that the game is won by a single no-signaling box by Lemma 2. Firstly, note that the no-signaling graph $G_{NS}(\textsl{g}^U)$ for any unique game of $n$ parties with $m$ inputs each with $k$ outputs is of size $m^n \cdot k^{n-1}$ since for each set of inputs $\textbf{x}$ (of which there are $m^n$) there are $k^{n-1}$ outputs $\textbf{a}$ that win the game. Therefore, for $G_{NS}(\textsl{g}^{U,ex})$ with vertex set $\textsl{V}$, we have $|\textsl{V}| = 2^n k^{n-1}$. We may group the vertices into two sets $\textsl{V}_j \defeq \left\lbrace \textsl{v} : \oplus_{i=1}^{n} a_i^{(\textsl{v})} \; \text{(mod k)} = \prod_{i=1}^{n} x_i^{(\textsl{v})} = j \right\rbrace$ for $j \in \{0, 1\}$. 

The edges of the graph $G_{NS}(\textsl{g}^{U, ex})$ can also be grouped into two sets based on $\textsl{V}_0$ and $\textsl{V}_1$:  
\begin{enumerate}
\item For $\textsl{u}, \textsl{v} \in \textsl{V}_0$, we have the edge $\textsl{u} \sim \textsl{v}$ if $\left(\textbf{a}^{(\textsl{v})} = \textbf{a}^{(\textsl{u})}\right) \wedge \left(\textbf{x}^{(\textsl{v})} = \textbf{x}^{(\textsl{u})}_{\oplus_2 i}\right)$ for some $i \in [n]$ where $\textbf{x}^{(\textsl{u})}_{\oplus_2 i} = \{x_1^{(\textsl{u})}, \dots,x_{i-1}^{(\textsl{u})}, x_i^{(\textsl{u})} \oplus_2 1, x_{i+1}^{(\textsl{u})}, \dots, x_n^{(\textsl{u})}\}$, and $\oplus_j$ denotes addition modulo $j$. 
\item For $\textsl{u} \in \textsl{V}_1, \textsl{v} \in \textsl{V}_0$, we have $\textsl{u} \sim \textsl{v}$ if $\left(\textbf{a}^{(\textsl{v})} = \textbf{a}^{(\textsl{u})}_{\oplus_k i} \right) \wedge \left(\textbf{x}^{(\textsl{v})} = \textbf{x}^{(\textsl{u})}_{\oplus_2 i} \right)$ for some $i \in [n]$.  
\end{enumerate}
Therefore, we have that the graph is regular with the degree of each vertex (the number of edges incident at each vertex) equal to $n$, so that the total number of edges in the graph is given by $|\textsl{E}| = \frac{n 2^n k^{n-1}}{2}$.
Let us divide the vertex set $\textsl{V}$ into $k^{n-1}$ subsets each of fixed $\textbf{a}_{n-1}$ (outputs of the first $n-1$ parties) which we denote by $\textsl{V}^{\textbf{a}_{n-1}}$ with $|\textsl{V}^{\textbf{a}_{n-1}}| = 2^n$. From the edge sets above, it is clear that the corresponding subgraphs $G^{\textbf{a}_{n-1}}$ are connected. It remains to be shown that these $k^{n-1}$ subgraphs are connected to each other. 

To do this, let us form a new graph $G'_{NS}$ with each connected subgraph $G^{\textbf{a}_{n-1}}$ as a single vertex in the new graph and edges between two vertices in $G'_{NS}$ if there exists an edge between any two vertices in the corresponding subgraphs. Due to the $\textsl{xor}$ structure, the edges in set (2) above with $\textsl{u} \in \textsl{V}_1$ and $\textsl{v} \in \textsl{V}_0$ give rise to cyclic subgraphs of length $k$ in $G'_{NS}$. The vertex set of $G'_{NS}$ (of size $k^{n-1}$) is thus partitioned in $n-1$ different ways into $k^{n-2}$ sets of vertices belonging to disjoint cyclic subgraphs of size $k$ each. This guarantees that the graph $G'_{NS}$ is connected; and hence that the different subgraphs $G^{\textbf{a}_{n-1}}$ for each $\textbf{a}_{n-1}$ are connected to each other. The graph $G_{NS}(\textsl{g}^{U, ex})$ is thus connected; by Lemma 2, we have that the game $\textsl{g}^{U, ex}$ is won by a unique no-signaling vertex, and hence cannot be won using quantum resources.
\end{proof}
In \cite{Cleve}, it was shown that for two-player binary games $\textsl{g}^{\text{bin}}$ (where the output alphabet of the two players is binary), $\omega_q(\textsl{g}^{\text{bin}}) = 1$ if and only if $\omega_c(\textsl{g}^{\text{bin}}) = 1$. For the sub-class of binary games known as (total) \textsc{xor} games (where the game winning constraint only depends on the \textsc{xor} of the two players' outputs), this result can be seen as a consequence of Lemma 2. This is because for total $\textsc{xor}$ games, it is readily seen that the graph $G_{NS}$ is disconnected (and the corresponding block matrix composed of the two possible permutation matrices corresponding to correlations and anti-correlations is reducible) if and only if the game can be won with a classical deterministic strategy.

\textit{\textbf{Lemma 3.}
For any $\epsilon \in (0,1)$ there exists a deterministic distillation protocol $\mathbb{D}$ that maps $\mathcal{P}_{\epsilon, n}^{\otimes r}$ to $\mathcal{P}^{nl,ex}_{n}$ for large $r$, i.e., $\mathbb{D}: \mathcal{P}_{\epsilon, n}^{\otimes r} \xrightarrow{r \rightarrow \infty} \mathcal{P}^{nl,ex}_{n}$. }
\begin{proof}
We show a protocol $\mathbb{D}^{(2)}$ that maps $\mathcal{P}_{\epsilon, n}^{\otimes 2}$ to $\mathcal{P}_{\epsilon', n}$ with $\epsilon' > \epsilon$; the statement then follows by iteration of $\mathbb{D}^{(2)}$. Let us denote by $x_{i}^{(j)}$ and  $a_{i}^{(j)}$ the input and outputs of the $i$-th party to the $j$-th copy of the box, where $i \in [n]$ and $j \in \{1,2\}$. The protocol $\mathbb{D}^{(2)}$ maps to the box with inputs $x_{i}$ and outputs $a_{i}$. The inputs to the two boxes are given by $x_{i}^{(1)} = x_{i}$ and $x_{i}^{(2)} = x_{i} \cdot \mathsf{I}(a_{i}^{(1)} \oplus_k 1)$, where $\mathsf{I}$ is the indicator $\mathsf{I}(b) = 1$ if $b = 1$ and $0$ otherwise. The $i$-th party finally outputs $a_{i} = a_{i}^{(1)} \oplus a_{i}^{(2)}$. We show that with probability $\epsilon' > \epsilon$, we have $\oplus_{i=1}^{n} a_i = \prod_{i=1}^{n} x_i$ so that the iteration of the protocol succeeds in distilling the box $\mathcal{P}_{\epsilon, n}$ to the vertex $\mathcal{P}^{nl,ex}_{n}$. Consider the action of the protocol on each of the four terms in $\mathcal{P}_{\epsilon, n}^{\otimes 2}$. (i) ${\mathcal{P}^{nl,ex}_{n}}^{\otimes 2}$: With the above choice of inputs, we have that $\oplus_{i=1}^{n} a_{i}^{(1)} = \prod_{i=1}^{n} x_i$ and $\oplus_{i=1}^{n} a_{i}^{(2)} = \prod_{i=1}^{n} x_i \cdot \mathsf{I}(a_{i} \oplus_k 1)$. It is readily seen that when $x_i = 0$ for some $i$, we have $\oplus_{i} a_i = \oplus_{i} \left(a_{i}^{(1)} \oplus a_{i}^{(2)} \right) = 0$; and when $x_i = 1 \; \forall i$, since the first box has $\oplus_{i} a_{i}^{(1)} = 1$, the function $\mathsf{I}(a_{i}^{(1)} \oplus_k 1)$ ensures that $\oplus_{i} a_{i}^{(2)} = 0$ so that once more we have $\oplus_{i} a_i = \prod_i x_i$. Therefore, the protocol maps the box ${\mathcal{P}^{nl,ex}_{n}}^{\otimes 2}$ to $\mathcal{P}^{nl,ex}_{n}$. (ii) $\mathcal{P}^{nl,ex}_{n} \otimes \mathcal{P}^{c,ex}_{n}$: by definition, we have $\oplus_{i} a_{i}^{(1)} = \prod_{i} x_i$ and $\oplus_{i} a_{i}^{(2)} = 0$, so that this box is also mapped to $\mathcal{P}^{nl,ex}_{n}$. (iii) $\mathcal{P}^{c,ex}_{n} \otimes \mathcal{P}^{nl,ex}_{n}$: we have that $\oplus_{i} a_{i}^{(2)} = \prod_{i} x_i$ when $a_i^{(1)} = 0 \; \forall i$ which happens with probability $\frac{1}{k^{n-1}}$ so that this box is mapped to $\frac{1}{k^{n-1}} \mathcal{P}^{nl,ex}_{n} + \left( 1 - \frac{1}{k^{n-1}} \right) \mathcal{P}^{c,ex}_{n}$. (iv) ${\mathcal{P}^{c,ex}_{n}}^{\otimes 2}$: we have that simply $\oplus_{i} a_i = 0$ so that this box is mapped to $\mathcal{P}^{c,ex}_{n}$. Overall, we see that the protocol $\mathbb{D}^{(2)}$ maps $\mathcal{P}_{\epsilon, n}^{\otimes 2}$ to $\epsilon' \mathcal{P}^{nl,ex}_{n} + \left(1 - \epsilon' \right) \mathcal{P}^{c,ex}_{n}$ with $\epsilon' = \frac{\epsilon}{k^{n-1}} \left(k^{n-1} + 1 - \epsilon \right) >  \epsilon$. Iterating this protocol for $r = 2^j$ boxes for large $j$ results in the distillation of the initial box $\mathcal{P}_{\epsilon, n}$ to the non-local vertex $\mathcal{P}^{nl,ex}_{n}$.  
\end{proof}

\textit{\textbf{Lemma 4.}
For the Bell scenario $\mathcal{B}(n,2,k)$, for any $\epsilon \in [0,1)$ the box $\mathcal{P}^{\text{iso}}_{\epsilon}$ admits a non-local extension to a box $\bar{\mathcal{P}}(\textbf{a},e|\textbf{x},z)$, with $a_i, e \in \{0,\dots,k-1\}, x_i ,z \in \{0,1\}$.}

\begin{proof}
Define the non-local box $\bar{\mathcal{P}}_{\epsilon, n+1}$ as
\begin{equation}
\bar{\mathcal{P}}_{\epsilon, n+1} \defeq \epsilon \mathcal{P}^{nl,ex}_{n} \otimes \mathcal{P}^{d}_{1} + (1 - \epsilon) \tilde{\mathcal{P}}^{nl,ex}_{n+1},
\end{equation} 
where $\mathcal{P}^{d}_{1}$ is a deterministic box with entries $\mathcal{P}^{d}_{1}(0|z) = 1 \; \; \forall z$ and $0$ otherwise, and $\tilde{\mathcal{P}}^{nl,ex}_{n+1}$ is a slightly modified form of the generalized PR box, obeying $\oplus_{i=1}^{n+1} a_i = x_1 \cdot  (x_{2} \oplus_2 1) \cdot \dots \cdot (x_n \oplus_2 1) \cdot x_{n+1}$. It is then readily seen that $\mathcal{P}_{\epsilon, n}$ is obtained from $\bar{\mathcal{P}}_{\epsilon, n+1}$ by tracing out the $(n+1)$-th system. 

We now show that for any $\epsilon \in [0,1)$, the box $\bar{\mathcal{P}}_{\epsilon,n+1}$ in the cut $(A_1, \dots, A_n) \vert E$ violates the bipartite Bell inequality defined for $\mathcal{P}^{nl,ex}_{2}$, i.e., with the winning constraint $a \oplus e= x \cdot z$. This is achieved by the following choice of inputs for the $n+1$ parties: $(x_1 = x, x_2 = \dots x_n = 0, x_{n+1} = z)$ and the outputs $(a = \oplus_{i=1}^{n} a_i, e = a_{n+1})$. With this choice of inputs and outputs, the box $\bar{\mathcal{P}}_{\epsilon,n+1}$ achieves in the cut $(A_1, \dots, A_n) \vert E$ the value $4 - \epsilon$ for the bipartite inequality which beats the local bound of $3$ for any $\epsilon \in [0,1)$. 
\end{proof}

We now present the proof of the main Theorem 1 stated in the main text. To do this, we first present some background information on recently discovered relations between contextuality and non-local game scenarios to graph theory.

In \cite{NPA}, a hierarchy of semi-definite programs was formulated for optimization with non-commuting variables, that is used to identify when a box $\mathcal{P}$ does not belong to the quantum set; it was also shown that the hierarchy converges to a set $\mathbf{Q}^{pr}(n,m,k)$. Note that the set $\mathbf{Q}^{pr}(n,m,k)$ described in this hierarchy can be in fact larger than $\mathbf{Q}(n,m,k)$ since the former set is defined for the probabilities given by
\begin{equation}
P(\textbf{a}| \textbf{x}) = \langle \psi |\prod_{i=1}^{n} E^{{x_i},{a_i}}_{i} | \psi \rangle
\end{equation}
with commutativity imposed on the measurements of different parties, i.e., ($\left[E^{{x_i},{a_i}}_{i}, E^{{x_j},{a_j}}_{j} \right] = 0$ for $i \neq j$). In other words, the strict requirement of a tensor product structure of the operators of different parties is replaced with the requirement of only commutation between these operators. Since $\mathbf{Q}(n,m,k) \subseteq \mathbf{Q}^{pr}(n,m,k)$, the exclusion of the non-local vertices from $\mathbf{Q}^{pr}(n,m,k)$ is sufficient to exclude them from the quantum set. 

In the hierarchy, one considers sets of sequences of products of projection operators $S_1 = \{\identity\} \bigcup \{E^{x_{1}, {a_{1}}}_{1} \} \bigcup \dots \bigcup \{E^{x_{n}, {a_{n}}}_{n}\}$, $S_2 = S_1 \bigcup \{E^{{x_{i_1}},{a_{i_1}}}_{i_1} E^{{x_{i_2}},{a_{i_2}}}_{i_2}: i_1, i_2 \in [n]\}$, etc.
The convex sets $\mathbf{Q}_l$ corresponding to different levels of the hierarchy are then constructed by testing for the existence of a certificate $\Gamma^l$ associated to the set of operators $S_l$ using a semi-definite program. The certificate $\Gamma^l$ corresponding to level $l$ is a $|S_l| \times |S_l|$ matrix whose rows and columns are indexed by the operators in set $S_l$, and is a complex Hermitian positive semi-definite matrix that satisfies the following constraints on its entries
\begin{widetext}
\begin{eqnarray}
\Gamma^l_{\identity, \identity} = 1, \; \; \Gamma^l_{\identity, E^{{x_{i_1}},{a_{i_1}}}_{i_1} \dots E^{{x_{i_j}},{a_{i_j}}}_{i_j}} =  \Gamma^l_{E^{{x_{i_1}},{a_{i_1}}}_{i_1} \dots E^{{x_{i_j}},{a_{i_j}}}_{i_j},\identity}  &&= \Gamma^l_{E^{{x_{i_1}},{a_{i_1}}}_{i_1} \dots E^{{x_{i_j}},{a_{i_j}}}_{i_j}, E^{{x_{i_1}},{a_{i_1}}}_{i_1} \dots E^{{x_{i_j}},{a_{i_j}}}_{i_j}} = P(a_{i_1} \dots a_{i_j}|x_{i_1} \dots x_{i_j})  \nonumber \\
&& \qquad \qquad \qquad \qquad \qquad \qquad \qquad \qquad  (j \leq l) \nonumber
\end{eqnarray}
\end{widetext}
as well as constraints resulting from operator relations, $\Gamma^l_{Q, R} = \Gamma^l_{S, T}$ if $Q^{\dagger} R = S^{\dagger} T$. The lack of existence of a positive semi-definite certificate corresponding to any level of the hierarchy implies the exclusion of the box from the quantum set. Recently, one of the levels of the hierarchy labeled $\mathbf{\tilde{Q}}(n,m,k)$ has been highlighted as the \textit{almost quantum} set \cite{AQ}, owing to the fact that it was proven to satisfy most (all but the principle of \textit{information causality} \cite{Principles2}, there being numerical evidence in support of this too) of the reasonable information-theoretic principles designed to pick out quantum theory from among all no-signaling theories. 
The set $\mathbf{\tilde{Q}}$ is an intermediate level of the hierarchy associated to the set of operators $\tilde{S} = \identity \bigcup \{ E^{{x_{1}},{a_{1}}}_{1} \dots E^{{x_{n}},{a_{n}}}_{n}: a_i \in \mathcal{A}_i, x_i \in \mathcal{X}_i \}$.
As we shall see, $\mathbf{\tilde{Q}}(n,m,k)$ already excludes all the non-local no-signaling vertices. 


Recently, a relationship was discovered between Bell scenarios, more general contextuality scenarios, and graph theory via the use of orthogonality graphs \cite{Winter, Principles4} and hypergraphs \cite{Fritz}. The orthogonality graph $G_{\mathcal{B}}$ corresponding to  the Bell scenario $\mathcal{B}(n,m,k)$ is constructed as follows. Each event $(\textbf{a}^{\textsl{v}} | \textbf{x}^{\textsl{v}})$ of the Bell scenario corresponds to a distinct vertex $\textsl{v}$ of the graph and two such vertices are connected by an edge in the graph if the corresponding events are locally orthogonal \cite{Principles4}, i.e. 
\begin{equation}
\textsl{u} \sim \textsl{v} \Longleftrightarrow \; \; \exists i \in [n] \; \; s.t \; \; x_i^{\textsl{u}} = x_i^{\textsl{v}} \wedge a_i^{\textsl{u}} \neq a_i^{\textsl{v}}.
\end{equation} 
Equivalently, each operator of the form $E^{{x_{1}},{a_{1}}}_{1} \dots E^{{x_{n}},{a_{n}}}_n$ in the $n$-party Bell scenario corresponds to a vertex of $G_{\mathcal{B}}$, and a pair of vertices is connected by an edge if 
$\exists i \in [n]$ such that $E^{{x_{i}},{a_{i}}}_{i} E^{{x_{i}},{a'_{i}}}_{i} = 0$. The graph $G_{\mathit{B}}$ therefore has $\bar{D} \defeq (mk)^n$ vertices (in general the number of vertices is $\prod_{i=1}^{n} m_i k_i$). An orthonormal representation of a graph $G$ with vertex set $V$ and edge set $E$ is a set of unit vectors $\{| u_i \rangle: i \in V\}$ where $|u_i \rangle \in \mathbb{R}^N$ for some $N$, $\| |u_i \rangle \| = 1$ for all $i \in V$ and $\langle u_i | u_j \rangle = 0$ for $(i,j) \in E$ (note that this differs from the original representation in \cite{Lovasz-2} where \textit{non-adjacent} vertices were assigned orthogonal vectors). Based on orthonormal representations, one constructs the theta body TH(G) of the graph which is a convex set introduced in \cite{Lovasz-2} in relation to the graph-theoretic problem of finding a maximum weight stable set (a stable set is a set of mutually disconnected vertices) of the graph. We refer to \cite{Lovasz-2} for details. 
\begin{mydef}[\cite{Lovasz-2}]
For any graph $G = (V, E)$, $TH(G) \defeq \{\mathcal{P} = (|\langle \psi |u_i \rangle|^2: i \in V) \in \mathbb{R}_{+}^{V}: \left \| \psi \right \| = \left \| |u_i \rangle \right\| = 1, \{|u_i \rangle\}$ is an orthonormal representation of $G$.\}
\end{mydef}
We are now ready to provide the proof of the main theorem. 

\begin{thm}
For some (arbitrary) $(n, m, k)$, let $\mathcal{P}^{nl}$ be a vertex of the no-signaling polytope $\mathbf{NS}(n,m,k)$ such that $\mathcal{P}^{nl} \notin \mathbf{C}(n,m,k)$. Then $\mathcal{P}^{nl} \notin \text{cl}(\mathbf{Q}(n,m,k))$.
\end{thm}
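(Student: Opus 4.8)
\emph{Plan of proof.} The plan is to argue not about $\mathbf{Q}(n,m,k)$ directly but about the larger \emph{almost quantum} set $\mathbf{\tilde{Q}}(n,m,k)$, one level of the NPA hierarchy, which will be shown to be closed; since $\mathbf{Q}(n,m,k)\subseteq\mathbf{\tilde{Q}}(n,m,k)$ this gives $\text{cl}(\mathbf{Q}(n,m,k))\subseteq\mathbf{\tilde{Q}}(n,m,k)$, so it suffices to prove $\mathcal{P}^{nl}\notin\mathbf{\tilde{Q}}(n,m,k)$. First I would set up the dictionary between the almost-quantum level and graph theory, following \cite{Fritz} (see also \cite{Winter,Principles4,AQ}): to $\mathcal{B}(n,m,k)$ associate its orthogonality graph $G$, with one vertex per event $(\textbf{a}|\textbf{x})$ and edges given by local orthogonality, and note that a positive semidefinite certificate for the operator set $\tilde{S}$ is exactly the Gram matrix of a unit vector $|\psi\rangle$ together with an orthonormal representation $\{|u_i\rangle\}$ of $G$ whose associated box has entries $|\langle\psi|u_i\rangle|^2$, while normalization and no-signaling translate into the clique equalities $C_{n,ns}$. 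This identifies $\mathbf{\tilde{Q}}(n,m,k)=TH^c(G)$, and in particular $\mathbf{\tilde{Q}}(n,m,k)\subseteq TH(G)$.

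Next I would extract the combinatorial content of $\mathcal{P}^{nl}$ being a non-local vertex. By Fact~\ref{vertex-poly} the tight submatrix $\tilde{A}(\mathcal{P}^{nl})$ has rank $(mk)^{n}$, i.e., full column rank, and determines $\mathcal{P}^{nl}$ uniquely; and since the only boxes with all entries in $\{0,1\}$ that obey normalization and no-signaling are the deterministic (classical) ones, the hypothesis $\mathcal{P}^{nl}\notin\mathbf{C}(n,m,k)$ forces $\mathcal{P}^{nl}$ to have a non-integral entry. By the local-orthogonality principle \cite{Principles4}, $\mathbf{NS}(n,m,k)$ is the face of $QSTAB(G)$ cut out by the equalities $C_{n,ns}$, so $\mathcal{P}^{nl}$ is in particular a non-integral vertex of $QSTAB(G)$. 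Using the description of $TH(G)$ by the infinite family of orthonormal-representation inequalities \cite{Fujie-Tamura}, together with the arguments of \cite{Shepherd,Marcel-2} that $TH(G)$ contains no non-integral vertex of $QSTAB(G)$, I would exhibit one inequality of that family which $\mathcal{P}^{nl}$ violates; and, to obtain a quantitative (hence closure-robust) separation, I would bound the violation from below by an explicit function of $\lmin(\tilde{M}(\mathcal{P}^{nl}))$, with $\tilde{M}(\mathcal{P}^{nl})=\tilde{A}(\mathcal{P}^{nl})^{T}\tilde{A}(\mathcal{P}^{nl})$ positive definite by full column rank. Intuitively, a vertex pinned down by a full-rank system of equalities lies a definite distance from the rest of the feasible region, and that distance controls how far $\mathcal{P}^{nl}$ protrudes beyond $TH(G)\supseteq\mathbf{\tilde{Q}}(n,m,k)$. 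This yields $\mathcal{P}^{nl}\notin\mathbf{\tilde{Q}}(n,m,k)$.

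Finally I would check that $\mathbf{\tilde{Q}}(n,m,k)=TH^c(G)$ is closed: it is the continuous image of a compact set --- unit-trace positive semidefinite certificates of the fixed size $|\tilde{S}|$, equivalently normalized orthonormal representations of $G$ together with a unit state --- intersected with the closed conditions $C_{n,ns}$. Since also $\mathbf{Q}(n,m,k)\subseteq\mathbf{\tilde{Q}}(n,m,k)$ (and the NPA hierarchy converges to $\text{cl}(\mathbf{Q}^{pr}(n,m,k))\supseteq\text{cl}(\mathbf{Q}(n,m,k))$), this gives $\text{cl}(\mathbf{Q}(n,m,k))\subseteq\mathbf{\tilde{Q}}(n,m,k)$, and the previous step then forces $\mathcal{P}^{nl}\notin\text{cl}(\mathbf{Q}(n,m,k))$.

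I expect the main obstacle to be the explicit separation: extracting from the non-integral vertex $\mathcal{P}^{nl}$ a concrete inequality of the Fujie--Tamura family that it violates, and --- more delicately --- converting this into an effective gap by a manifestly positive lower bound on $\lmin(\tilde{M}(\mathcal{P}^{nl}))$; without such a bound one cannot pass from exclusion from $\mathbf{Q}(n,m,k)$ to exclusion from its closure. A secondary point needing care is confirming that the NPA-to-graph dictionary yields the \emph{equality} $\mathbf{\tilde{Q}}(n,m,k)=TH^c(G)$, not merely an inclusion, and that the map involved there is continuous on a genuinely compact domain.
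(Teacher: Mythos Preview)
Your proposal is correct and follows essentially the same route as the paper: identify $\mathbf{\tilde{Q}}(n,m,k)=TH^{c}(G)\subset TH(G)$, use the Fujie--Tamura characterization of $TH(G)$ together with the positive-definite matrix $\tilde{M}(\mathcal{P}^{nl})=\tilde{A}(\mathcal{P}^{nl})^{T}\tilde{A}(\mathcal{P}^{nl})$ to exhibit a strictly violated inequality at the non-integral vertex, and then invoke closedness to pass to $\text{cl}(\mathbf{Q})$. Two small remarks: (i) the detour through ``$\mathcal{P}^{nl}$ is a non-integral vertex of $QSTAB(G)$'' is unnecessary and would require checking that $\mathbf{NS}$ is actually a face of $QSTAB(G)$ --- the paper instead works directly with the tight constraints at $\mathcal{P}^{nl}$, noting that normalization and no-signaling are themselves clique equalities so that $\tilde{M}\in\mathbb{M}$; (ii) for closure the paper simply cites that $TH(G)$ is closed \cite{GPT}, which already suffices once a strict violation is shown, so you do not need a quantitative lower bound on $\lambda_{\min}(\tilde{M})$ nor a separate compactness argument for $TH^{c}(G)$.
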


\begin{proof}
We first examine the precise relation between $\mathbf{\tilde{Q}}$ and the set TH(G), noting that a relation in terms of (non-orthogonality) hypergraphs was proven in \cite{Fritz}. 
To do this, we first note that as shown in \cite{Principles4}, normalization and no-signaling constraints on a box are equivalent to (maximum) clique inequalities on the orthogonality graph, where a clique inequality is an inequality of the form $\sum_{i \in c} | \mathcal{P} \rangle_i \leq 1$ for some clique $c$ in the graph (a clique is a set of mutually connected vertices). The fact that normalization is a clique inequality that is saturated for the orthogonality graph is clear. The fact that no-signaling can also be formulated as a maximum clique inequality (that is also saturated) is seen as follows \cite{Principles4}. 
Using the normalization constraints, the no-signaling constraints can be rewritten as 
\begin{widetext}
\begin{eqnarray}
\sum_{a_j} P(a_1, \dots, a_j, \dots, a_n | x_1, \dots, x_j, \dots, x_n)  + \sum_{a'_j} \sum_{\substack{\{\tilde{a}_1, \dots \tilde{a}_{j-1}, \tilde{a}_{j+1} \dots \tilde{a}_n\} \neq \\ \{a_1, \dots, a_{j-1}, a_{j+1}, \dots, a_n\}}}  P(\tilde{a}_1, \dots, a'_j, \dots, \tilde{a}_n | x_1, \dots, x'_j, \dots, x_n) = 1 \nonumber \\
\qquad \qquad \qquad \qquad \qquad \; \; \forall j, x_j, x'_j, x_i \; (i \neq j)
\end{eqnarray}
\end{widetext}
Now, it is readily seen that all the events occurring in the above expression are locally orthogonal, so that the no-signaling constraints are maximum clique inequalities that are saturated for the orthogonality graph $G_{\mathit{B}}$; however we also note that not every maximum clique inequality of $G_{\mathit{B}}$ corresponds to a no-signaling or a normalization condition. We now define the subset $TH^{(c)}(G) \subset TH(G)$ as the set of boxes in TH(G) for which the maximum cliques $c$ corresponding to the normalization and no-signaling constraints have $\sum_{i \in c} |\langle \psi |u_i \rangle|^2 = 1$, i.e., defining $C_{n,ns}$ as the set of maximum cliques corresponding to the normalization and no-signaling constraints, we have the following definition for $TH^{(c)}(G)$. 
\begin{mydef}[see also the definition of $\mathcal{E}_{QM}^{1}(\Gamma)$ in \cite{Winter}]
For the orthogonality graph $G_{\mathit{B}}$ corresponding to the Bell scenario $\mathit{B}(n,m,k)$, define the set $TH^{(c)}(G)$ as $TH^{(c)}(G) \defeq \{\mathcal{P} = (|\langle \psi |u_i \rangle|^2) \in TH(G): \forall c \in C_{n, ns}, \; \; \sum_{i \in c} |\langle \psi | u_i \rangle|^2 = 1\}$.
\end{mydef}
\begin{thm}[see also Theorem 6.3.1 in \cite{Fritz}]
\label{theta-alq}
For any $n$-party Bell scenario $\mathit{B}$, $\mathbf{\tilde{Q}} = TH^{(c)}(G_{\mathit{B}})$.
\end{thm}


\begin{proof}

The set $\mathbf{\tilde{Q}}$ of boxes $\mathcal{P} = \{P(\textbf{a}| \textbf{x}) \}$ is characterized by the existence of a certificate $\Gamma^{\tilde{Q}}$ which is a $\bar{D} \times \bar{D}$ complex Hermitian positive semi-definite matrix with the following constraints on its entries
\begin{widetext}
\begin{eqnarray}
&&\Gamma^{\tilde{Q}}_{\identity, \identity} = 1, \nonumber \\ &&\Gamma^{\tilde{Q}}_{\identity, E^{{x_{1}},{a_{1}}}_{1} \dots E^{{x_{n}},{a_{n}}}_{n}} =  \Gamma^{\tilde{Q}}_{E^{{x_{1}},{a_{1}}}_{1} \dots E^{{x_{n}},{a_{n}}}_{n},\identity} = \Gamma^{\tilde{Q}}_{E^{{x_{1}},{a_{1}}}_{1} \dots E^{{x_{n}},{a_{n}}}_{n}, E^{{x_{1}},{a_{1}}}_{1} \dots E^{{x_{n}},{a_{n}}}_{n}} = P(\textbf{a}| \textbf{x}) \; \; \; \forall (\textbf{a}, \textbf{x}) \nonumber 
\end{eqnarray}
\end{widetext}
%
as well as constraints resulting from any operator relations $\Gamma^{\tilde{Q}}_{L, M} = \Gamma^{\tilde{Q}}_{N, O}$ if $L^{\dagger} M = N^{\dagger} O$. Crucially, the only constraints from this set of operators are those that impose the Hermiticity of the matrix by the identity
\begin{eqnarray}
&&(E^{{x_{1}},{a_{1}}}_{1} \dots E^{{x_{n}},{a_{n}}}_{n}) (E^{{x'_{1}},{a'_{1}}}_{1} \dots E^{{x'_{n}},{a'_{n}}}_{n})^{\dagger} \equiv \nonumber \\ && \qquad \qquad  \{(E^{{x'_{1}},{a'_{1}}}_{1} \dots E^{{x'_{n}},{a'_{n}}}_{n}) (E^{{x_{1}},{a_{1}}}_{1} \dots E^{{x_{n}},{a_{n}}}_{n})^{\dagger}\}^{\dagger} \nonumber
\end{eqnarray}
due to the commutativity of the operators $E^{{x_{i_l}},{a_{i_l}}}_{i_1}$ and $E^{{x_{i_2}},{a_{i_2}}}_{i_2}$ acting on the systems of different parties $i_1$ and $i_2$. As noted in \cite{NPA}, the question of existence of the complex Hermitian positive semi-definite matrix $\Gamma^{\tilde{Q}}$ is equivalent to the existence of a real symmetric positive semi-definite matrix defined as $\frac{1}{2}(\Gamma^{\tilde{Q}} + {\Gamma^{\tilde{Q}}}^*)$. The set $\mathbf{\tilde{Q}}$ then consists of those boxes for which the certificate $\Gamma^{\tilde{Q}}$ exists, and the normalization and no-signaling constraints are satisfied.

For a graph $G = (V, E)$, the theta body TH(G) can be equivalently defined as \cite{Fujie-Tamura}
\begin{widetext}
\begin{equation}
\label{theta-body-2}
TH(G) = \left\{ 
\begin{tabular}{ c|c c } 
$| \mathcal{P} \rangle \in \mathbb{R}^{|V|}$ & $\exists \mathbf{\Pi} \in \mathbb{S}^{|V|} \; \; \text{s.t.}$ & \;\;\;\; $\mathbf{\Pi}_{i,j} = 0 \quad ((i,j) \in E),$ \\
   &&$\mathbf{\Pi}_{i,i} = | \mathcal{P} \rangle_i \quad (i \in V),$ \\
   &&$\mathbf{\Pi} - | \mathcal{P} \rangle \langle \mathcal{P} | \succeq 0$\\
 \end{tabular} 
 \right\}
\end{equation}
\end{widetext}
where $\mathbb{S}^{|V|}$ denotes the set of symmetric matrices of size $|V| \times |V|$ and we view the box $\mathcal{P}$ as a vector in $\mathbb{R}^{|V|}$ (for our graphs, we have $|V| = \bar{D} = (mk)^n$). While there are other equivalent definitions of this set, the above definition is most suitable to recognize the similarities between this set and $\mathbf{\tilde{Q}}$. 
By inspecting the characterization of $TH^{(c)}(G)$ above, it can be seen that the set $TH^{(c)}(G_{\mathit{B}})$ of the orthogonality graph corresponding to the Bell scenario $\mathit{B}$ is just the set $\mathbf{\tilde{Q}}$ of the hierarchy. 
This is seen by considering the requirement of the real symmetric positive semi-definite matrix $(\frac{1}{2})(\Gamma^{\tilde{Q}} + {\Gamma^{\tilde{Q}}}^*)$ in $\mathbf{\tilde{Q}}$ and identifying it with the requirement of the real symmetric positive semi-definite matrix $\Bigl(\begin{smallmatrix}
\identity& \langle \mathcal{P} | \\ |\mathcal{P} \rangle & \mathbf{\Pi}
\end{smallmatrix} \Bigr)$ which by the use of Schur complements \cite{Horn} is equivalent to the condition that $\mathbf{\Pi} -  | \mathcal{P} \rangle \langle \mathcal{P} | \succeq 0$ in Eq.(\ref{theta-body-2}). 
\end{proof}

The proof now follows analogously to a similar statement in \cite{Shepherd, Fujie-Tamura, Marcel-2}. To show this, we use the following alternative characterization of TH(G) in \cite{Fujie-Tamura},
\begin{equation}
\label{theta-body-3}
TH(G) = \{ |\mathcal{P} \rangle \in \mathbb{R}^{|V|} : \langle \mathcal{P} | M | \mathcal{P} \rangle - \sum_{i \in |V|} M_{i,i} | \mathcal{P} \rangle_i \leq 0 \; \; \forall M \in \mathbb{M} \}, 
\end{equation}
with 
\begin{equation}
\label{M-Fujie}
\mathbb{M} \defeq \{M \in \mathbb{S}^{|V|} : M_{\textsl{u}, \textsl{v}} = 0 \; (\textsl{u} \neq \textsl{v}, \textsl{u} \nsim \textsl{v}, \; M \succeq 0 \}.
\end{equation}
Consider a non-local vertex $\mathcal{P}^{nl}$ of the no-signaling polytope $\mathbf{NS}(n,m,k)$. From Fact 1, we know that there is an associated matrix $\tilde{A}$ such that $\tilde{A} \cdot |\mathcal{P}^{nl} \rangle = | \tilde{\mathit{b}} \rangle$, i.e., such that 
\begin{eqnarray}
\label{positivity-eq}
| \mathcal{P}^{nl} \rangle_i = 0 \qquad i \in \text{Pos}, 
\end{eqnarray}
where $\text{Pos}$ is the set of non-negativity constraints which are set to equality in the vertex box, and the number of \textit{independent} non-negativity constraints is $|\text{Pos}|_{Ind} = D = \left( m(k-1) + 1 \right)^n - 1$  (in general $\prod_{i=1}^{n} \left[ m_i (k_i - 1) +1 \right] - 1$) is the dimension of the no-signaling polytope $\mathbf{NS}(n,m,k)$. In other words, the vertex of $\mathbf{NS}(n,m,k)$ has $D$ independent probabilities that are assigned value $0$. Moreover, the box also obeys normalization and no-signaling constraints which are written as maximum clique equalities
\begin{equation}
\sum_{i \in c} | \mathcal{P}^{nl} \rangle_i = 1, \qquad \; \forall c \in C_{n,ns} 
\end{equation}
and the number of such \textit{independent} constraints is $|C_{n,ns}|_{Ind} = \bar{D} - D$. 
Consider the matrix $\tilde{M} \defeq \tilde{A}^T \tilde{A}$, which can be written as
\begin{equation}
\tilde{M} = \sum_{i \in \text{Pos}} I_{(i,i)} + \sum_{c \in\text{C}_{n,ns}} |j^{c} \rangle \langle j^{c}|,
\end{equation}
where $I_{(i,i)}$ is an indicator matrix of size $\bar{D} \times \bar{D}$ with only one non-zero diagonal entry at position $(i,i)$ which equals $1$. $|j^{c} \rangle$ is a $0-1$ vector of length $\bar{D}$ with support on the clique $c \in C_{n,ns}$ (where it takes value $1$) corresponding to a normalization or no-signaling constraint. It is readily checked that $\tilde{M} \in \mathbb{M}$ from Eq.(\ref{M-Fujie}). For any non-local vertex $\mathcal{P}^{nl}$, since we have $D$ independent non-negativity constraints and $\bar{D} - D$ independent normalization and no-signaling constraints, the matrix $\tilde{A}$ is of rank $\bar{D}$ and therefore $\tilde{M}$ is positive definite. By explicit calculation, one sees that for this choice of $\tilde{M}$ ,
\begin{equation}
\langle {\mathcal{P}^{nl}} | \tilde{M} | \mathcal{P}^{nl} \rangle - \sum_{i =1}^{\bar{D}} \tilde{M}_{i,i} | \mathcal{P}^{nl} \rangle_i = 0.
\end{equation} 
Since $\mathcal{P}^{nl}$ is a non-local vertex of a rational convex polytope, it has some entry $|\mathcal{P}^{nl} \rangle_{j}$ which is non-integral (and rational), i.e., $\exists j \in [\bar{D}] \; \; \text{s.t.} \; \; | \mathcal{P}^{nl} \rangle_{j} \notin \{0,1\}$. Now, one can construct the matrix $\tilde{M}' \defeq \tilde{M} - \epsilon I_{j,j}$ which is guaranteed to be positive semi-definite for some finite $\epsilon > 0$ since $\tilde{M}$ is positive definite. Moreover, we also have $\tilde{M}' \in \mathbb{M}$ from Eq.(\ref{M-Fujie}). Now, since $0 < |\mathcal{P}^{nl} \rangle_{j} < 1$, we have 
$| \mathcal{P}^{nl} \rangle_{j} \left( | \mathcal{P}^{nl} \rangle_j - 1\right) < 0$.
Therefore, 
\begin{equation}
\langle {\mathcal{P}^{nl}} | \tilde{M}' | \mathcal{P}^{nl} \rangle - \sum_{i =1}^{\bar{D}} \tilde{M}'_{i,i} | \mathcal{P}^{nl} \rangle_i = - \epsilon |\mathcal{P}^{nl} \rangle_{j} \left( |\mathcal{P}^{nl} \rangle_j - 1 \right) > 0,
\end{equation}
so that $\mathcal{P}^{nl} \notin TH(G)$ due to the characterization of TH(G) in Eq.(\ref{theta-body-3}). As we have seen in Theorem \ref{theta-alq}, for any Bell scenario $\mathit{B}(n,m,k)$, $\mathbf{Q}(n,m,k) \subseteq \mathbf{\tilde{Q}}(n,m,k) = \text{TH}^{(c)}(G_{\mathit{B}}) \subset \text{TH}(G_{\mathit{B}})$. Therefore, $\mathcal{P}^{nl} \notin \mathbf{Q}(n,m,k)$. 

Finally, consider the closure of the quantum set denoted as $\text{cl}(\mathbf{Q}(n,m,k))$. We know that $\text{cl}(\mathbf{Q}(n,m,k)) \subseteq \mathbf{\tilde{Q}}(n,m,k)$ since the NPA hierarchy \cite{NPA} converges to the closure of the quantum set. We also know (see for example \cite{GPT}) that the set $\text{TH}(G)$ is closed. This implies from the above argument that $\mathcal{P}^{nl} \notin \text{cl}(\mathbf{Q}(n,m,k))$ so that the extremal no-signaling boxes also cannot be approximated arbitrarily closely in quantum theory.  
\end{proof}


\begin{thebibliography}{0}
\expandafter\ifx\csname natexlab\endcsname\relax\def\natexlab#1{#1}\fi
\expandafter\ifx\csname bibnamefont\endcsname\relax
  \def\bibnamefont#1{#1}\fi
\expandafter\ifx\csname bibfnamefont\endcsname\relax
  \def\bibfnamefont#1{#1}\fi
\expandafter\ifx\csname citenamefont\endcsname\relax
  \def\citenamefont#1{#1}\fi
\expandafter\ifx\csname url\endcsname\relax
  \def\url#1{\texttt{#1}}\fi
\expandafter\ifx\csname urlprefix\endcsname\relax\def\urlprefix{URL }\fi
\providecommand{\bibinfo}[2]{#2}
\providecommand{\eprint}[2][]{\url{#2}}

\bibitem{Principles1}
G. Brassard, H. Buhrman, N. Linden, A. A. Methot, A. Tapp and F. Unger,
Phys. Rev. Lett. \textbf{96}, 250401 (2006).

\bibitem{Principles2}
M. Pawlowski, T. Paterek, D. Kaszlikowski, V. Scarani, A. Winter and M. Zukowski,
Nature \textbf{461}, 1101 (2009). 

\bibitem{Principles3}
M. Navascu\'{e}s and H. Wunderlich, 
Proc. Royal Soc. A \textbf{466}: 881 (2009). 

\bibitem{Principles4}
T. Fritz, A. B. Sainz, R. Augusiak, J. B. Brask, R. Chaves, A. Leverrier and A. Ac\'in,
Nature Communications \textbf{4}, 2263 (2013). 

\bibitem{Principles5}
A. Cabello,
Phys. Rev. Lett. \textbf{110}, 060402 (2013); 

\bibitem{BHK}
J. Barrett, L. Hardy and A. Kent,
Phys. Rev. Lett. \textbf{95}, 010503 (2005).

\bibitem{CR}
R. Colbeck and R. Renner,
Nature Physics \textbf{8}, 450 (2012). 

\bibitem{Pir10}
S. Pironio et al.,
Nature 464, 1021 (2010).

\bibitem{Tsirelson}
B. S. Cirel'son, 
Lett. Math. Phys. \textbf{4}, 93 (1980).

%
%
%
%
%


\bibitem{CHSH}
 J. F. Clauser, M. A. Horne, A. Shimony and R. A. Holt, 
 Phys. Rev. Lett. \textbf{23}, 880 (1969).

\bibitem{PR}
S. Popescu and D. Rohrlich, Found. Phys. \textbf{24}, 379 (1994).

\bibitem{Fritz2}
T. Fritz,
J. Math. Phys. \textbf{53}, 072202 (2012).

\bibitem{YCATS}
T. H. Yang, D. Cavalcanti, M. Almeida, C. Teo and V. Scarani, 
New J. Phys. \textbf{14}, 013061 (2012).

\bibitem{Winter}
A. Cabello, S. Severini and A. Winter,
arXiv: 1010.2163 (2010).

\bibitem{CSW}
A. Cabello, S. Severini and A. Winter,
Phys. Rev. Lett. \textbf{112}, 040401 (2014).


\bibitem{Fritz}
A. Ac\'{i}n, T. Fritz, A. Leverrier and A. B. Sainz,
Comm. Math. Phys. \textbf{334}(2), 533 (2015).


\bibitem{Schrijver}
A. Schrijver, "Combinatorial Optimization. Polyhedra and Efficiency". Vol. A, volume 24 of Algorithms and Combinatorics. Springer-Verlag, Berlin (2003).  




\bibitem{Khot}
S. Khot, 
Proceedings of the thirty-fourth annual ACM symposium on Theory of computing, 767 (2002).

\bibitem{Cleve}
R. Cleve, P. Hoyer, B. Toner and J. Watrous, 
arXiv: 0404076 (2004). 

\bibitem{BS2009}
N. Brunner and P. Skrzypczyk, Phys. Rev. Lett. \textbf{102}, 160403 (2009).

\bibitem{Pironio}
S. Pironio,
J. Math. Phys. 46, 062112 (2005).

\bibitem{EW2014}
H. Ebbe and S. Wolf,  IEEE Trans. on Inf. Theory, \textbf{60}(2), 1159 (2014).

\bibitem{BLMPPR2005}
J. Barrett, N. Linden, S. Massar, S. Pironio, S. Popescu and D. Roberts, Phys. Rev. A \textbf{71}, 022101 (2005).

\bibitem{GHZ}
D. M. Greenberger, M. A. Horne, and A. Zeilinger, in 
\textit{Bell's Theorem, Quantum Theory, and Conceptions of the Universe}
(Kluwer, Dordrecht), 69 (1989).

\bibitem{BC}
S. L. Braunstein and C. M. Caves, Annals of Physics {\bf 202}, 22 (1990).

\bibitem{NPA}
M. Navascu\'{e}s, S. Pironio and A. Ac\'{i}n,
New J. Phys. \textbf{10}, 073013 (2008).

\bibitem{AQ}
M. Navascu\'{e}s, Y. Guryanova, M. J. Hoban and A. Ac\'{i}n,
Nature Communications \textbf{6}, 6288 (2015).


\bibitem{Lovasz-2}
M. Gr\"{o}tschel, L. Lov\'{a}sz and A. Schrijver, J. Combin. Theory Ser. B, \textbf{40}(3), 330 (1986).  


\bibitem{Fujie-Tamura}
T. Fujie and A. Tamura, J. Op. Res. Soc. Jpn. 45, 285
(2002).


\bibitem{Shepherd}
F. B. Shepherd, In \textit{Perfect graphs}, Wiley-Intersci. Ser. Discrete Math. Optim., 261, Wiley, Chichester (2001).



\bibitem{Marcel-2}
M. K. de Carli Silva, PhD Thesis,
University of Waterloo (2013). 

\bibitem{Horn}
R. A. Horn and C. R. Johnson,
\textit{Matrix Analysis}, Cambridge University Press, Cambridge (1987). 

\bibitem{Julio}
J. I. de Vicente,
Phys. Rev. A \textbf{92}, 032103 (2015).

\bibitem{RRG}
R. Ramanathan, R. Augusiak and G. Murta,
Phys. Rev. A \textbf{93}, 022333 (2016). 

\bibitem{Barrett}
J. Barrett, arXiv: quant-ph/0508211 (2005).

\bibitem{BG14}
S. Beigi and A. Gohari, arXiv: 1409.3665 (2014).


%
%
%
%


\bibitem{GPT}
J. Gouveia, P. A. Parrilo and R. R. Thomas,
SIAM J. Optim. Vol. 20, 4, 2097 (2010). \\

%






%
%
%
%
%




































\end{thebibliography}
\end{document}